\def\complexNumbers{\mathbb{C}}
\def\realNumbers{\mathbb{R}}
\def\integers{\mathbb{Z}}
\def\integersPositive{\mathbb{Z}^{+}}
\def\integersNonnegative{\mathbb{Z}_{0}^{+}}
\def\constante{{\rm e}}
\def\constantj{{\rm j}}
\def\exponentialBase{\xi}
\def\numberOfEnumarationSequences[#1]{\mathcal{D}^{(#1)}(M,H)}
\def\lagForCorrelation{k}
\def\unitSequences{\mathcal{C}^{(m)}(H)}
\def\indexEleOfSeq{i}
\def\indexIteration{n}
\def\indexIterationANF{l}
\def\indexFirstOrderMonomial{j}
\def\indexMonomial{k}
\def\orderMonomial[#1]{k_{#1}}
\def\coeffientsANF[#1]{c_{#1}}
\def\polyVariable{z}
\def\numberOfIterations{m}
\def\numberOfPointsForPSK{H}
\def\numberOfPointsForPSKexp{h}
\def\modulationSymbolF[#1]{m_{#1}}
\def\cardinalitySetOfOperators[#1]{{H}_{#1}}
\def\varMonomial{{i_{\rm d}}}
\def\monomial[#1]{x_{#1}}
\def\lengthData{M}
\def\scaleAexp[#1]{a_{#1}}
\def\scaleBexp[#1]{b_{#1}}
\def\scaleEexp[#1]{e_{#1}}
\def\angleexp[#1]{c_{#1}}
\def\angleexpAll[#1]{k_{#1}}
\def\angleScaleAexp[#1]{\dot{c}_{#1}}
\def\angleScaleBexp[#1]{\ddot{c}_{#1}}
\def\arbitraryPhaseK{k'}
\def\separationGolay[#1]{d_{#1}}
\def\separationGolayFix{d'}
\def\angleexpAllBitD[#1]{\hat{k}_{#1}}
\def\arbitraryPhaseKBitD{\hat{k}'}
\def\permutationMonoD[#1]{{\hat{\pi}_{#1}}}
\def\seqPermutationCompShiftD{\bm{\hat{\pi}}}
\def\eleGa[#1]{{a}_{#1}}
\def\eleGb[#1]{{b}_{#1}}
\def\eleGc[#1]{{c}_{#1}}
\def\eleGt[#1]{{t}_{#1}}
\def\apac[#1][#2]{\rho_{#1}(#2)}
\def\apacPositive[#1][#2]{\rho^{+}_{#1}(#2)}
\def\binaryAsignment[#1][#2]{b_{#1}^{(#2)}}
\def\eleSeqf[#1]{{f}_{#1}}
\def\eleSeqg[#1]{{g}_{#1}}
\def\eleSeqcf[#1]{{c}_{f,#1}}
\def\eleSeqcg[#1]{{c}_{g,#1}}
\def\scaleA[#1]{\alpha_{#1}}
\def\scaleB[#1]{\beta_{#1}}
\def\angleGolay[#1]{\omega_{#1}}
\def\angleScaleA[#1][#2]{\dot{\omega}_{#1}^{#2}}
\def\angleScaleB[#1][#2]{\ddot{\omega}_{#1}^{#2}}
\def\permutationShift[#1]{{\psi_{#1}}}
\def\permutationMono[#1]{{\pi_{#1}}}
\def\permutationMonoPre[#1]{{\pi'_{#1}}}
\def\seqPermutationCompShift{\bm{\pi}}
\def\seqPermutationCompShiftSub{\bm{\pi}'}
\def\symbolDuration{T_{\rm s}}
\def\timeVar{t}
\def\Ptransmit{P_{\rm tx}}
\def\separationIterative[#1]{{\tau}_{#1}}
\def\permutationShiftAux[#1]{{\phi_{#1}}}
\def\separationGolayAux[#1]{\tau'_{#1}}
\def\funczArbitrary[#1]{z(#1)}
\def\funcaArbitrary[#1]{a(#1)}
\def\funcbArbitrary[#1]{b(#1)}
\def\coefficientArbitrary[#1]{k_{#1}}
\def\distanceToPoint[#1]{d_{#1}}
\def\ratioBetweenDistanceAndInner[#1]{d_{#1}} 
\def\angleBetweenPointAndXaxis[#1]{\theta_{#1}}
\def\angleBetweenPointAndXYDiagonal[#1]{\psi_{#1}}
\def\carrierFrequency{f_{\rm c}}
\def\vecArrangement[#1]{\textbf{b}_{#1}}
\def\seqGa{\textit{\textbf{a}}}
\def\seqGb{\textit{\textbf{b}}}
\def\seqGaIt[#1]{\textit{\textbf{a}}^{(#1)}}
\def\seqGbIt[#1]{\textit{\textbf{b}}^{(#1)}}
\def\seqGc{\textit{\textbf{c}}}
\def\seqGf[#1]{\textit{\textbf{f}}_{#1}}
\def\seqGg[#1]{\textit{\textbf{g}}_{#1}}
\def\seqGfdot[#1]{\bar{\textit{\textbf{f}}}_{#1}}
\def\seqGgdot[#1]{\bar{\textit{\textbf{g}}}_{#1}}
\def\OFDMinTime[#1][#2]{s_{#1}(#2)}
\def\seqGaP{A}
\def\seqGaItP[#1]{{A}^{(#1)}}
\def\seqGbItP[#1]{{B}^{(#1)}}
\def\seqGcP{C}
\def\seqSubP[#1]{{H}_{#1}}
\def\seqGfP[#1]{{{F}}_{#1}}
\def\seqGgP[#1]{{{G}}_{#1}}
\def\seqGt{\textit{\textbf{t}}}
\def\seqGtP{{{T}}}
\def\seqSub[#1]{\textit{\textbf{h}}_{#1}}
\def\seqFirstOrderMonomial[#1]{\textit{\textbf{m}}_{#1}}
\def\seqx{\textit{\textbf{x}}}
\def\seqToBeModulated[#1]{\textit{\textbf{s}}_{#1}}
\def\flipConjugate[#1]{{{\tilde{#1}}}}
\def\expectationOperator[#1][#2]{{\mathbb{E}_{#2}}[#1]}
\def\operator[#1][#2]{\mathcal{O}_{#1}^{(#2)}}
\def\operatordot[#1][#2]{\bar{\mathcal{O}}_{#1}^{(#2)}}
\def\compositeOperatorF[#1][#2]{{F}_{#1}{(#2)}}
\def\compositeOperatorG[#1][#2]{{G}_{#1}{(#2)}}
\def\compositeOperatorFdot[#1][#2]{\bar{F}_{#1}{(#2)}}
\def\compositeOperatorGdot[#1][#2]{\bar{G}_{#1}{(#2)}}
\def\setOfOperators[#1]{{\mathfrak{J}}_{#1}}
\def\operatorBinary[#1][#2]{O_{#1}^{(#2)}}
\def\operatorSign[#1][#2]{{\rm S}_{#1}^{(#2)}}
\def\operatorScaleA[#1][#2]{{\rm{A}}_{#1}^{(#2)}}
\def\operatorScaleB[#1][#2]{{\rm{B}}_{#1}^{(#2)}}
\def\operatorAngle[#1][#2]{\Omega_{#1}^{(#2)}}
\def\operatorSeparation[#1][#2]{\Delta_{#1}^{(#2)}}
\def\operatorOrderA[#1][#2]{\dot{\rm O}_{#1}^{(#2)}}
\def\operatorOrderB[#1][#2]{\ddot{\rm O}_{#1}^{(#2)}}
\def\operatorAngleScaleA[#1][#2]{\dot{\Omega}_{#1}^{(#2)}}
\def\operatorAngleScaleB[#1][#2]{\ddot{\Omega}_{#1}^{(#2)}}
\def\operatorAngleConjScaleA[#1][#2]{\dot{\Upsilon}_{#1}^{(#2)}}
\def\operatorAngleConjScaleB[#1][#2]{\ddot{\Upsilon}_{#1}^{(#2)}}
\def\functionf[#1]{p^{(#1)}}
\def\functiong[#1]{q^{(#1)}}
\def\functionfdot[#1]{{p}_{\indexIterationANF}^{(#1)}} 
\def\functiongdot[#1]{{q}_{\indexIterationANF}^{(#1)}} 
\def\funcfForCommonShift{f_{\rm s}}
\def\funcfForCommonShiftDec{\check{f}_{\rm s}}
\def\funcfForFinalPhase{f_{\rm i}}
\def\funcfForFinalPhaseDec{\check{f}_{\rm i}}
\def\funcfForANF{f}
\def\funcfForANFdec{\check{f}}
\def\funcEnum{{e}}
\def\funcEnumInv{{e^{-1}}}
\def\varMonomial{{i}}
\def\funcGfForANF[#1]{f_{#1}}
\def\funcGgForANF[#1]{g_{#1}}
\def\polySeq[#1][#2]{{#1}(#2)}
\def\shortMinus{\scalebox{0.75}[1.0]{\( - \)}}
\def\spectralEfficient{\rho}
\def\setMonomials[#1]{{\mathcal{M}_{#1}}}
\def\phaseOffsetl[#1]{\Delta_{#1}}
\def\phaseOffsetll[#1]{\Delta_{#1}}
\def\amplitudeOffset[#1]{\epsilon'_{#1}}
\def\amplitudeOffsetl[#1]{\epsilon_{#1}}
\def\referenceDerivative{\ell}
\def\receivedElement[#1]{r_{#1}}
\def\channel[#1]{c_{#1}}
\def\noise[#1]{n_{#1}}
\def\parametersImag{\theta_{\rm i}}
\def\parametersImagg{{{\psi}}_{\rm i}}
\def\parametersImagf{{{\phi}}_{\rm i}}
\def\weightedElement[#1]{s_{#1}}
\def\channelPowerElement[#1]{h_{#1}}
\def\weightedElementAfterSum[#1][#2]{s_{#1}^{#2}}
\def\channelPowerElementAfterSum[#1][#2]{h_{#1}^{#2}}
\def\funcfForFinalAmplitudeDerivate[#1][#2]{f_{\rm r}^{#1}(#2)}
\def\funcfForFinalPhaseDerivate[#1][#2]{f_{\rm i}^{#1}(#2)}
\def\funcgForFinalAmplitudeDerivate[#1][#2]{g_{\rm r}^{#1}(#2)}
\def\funcgForFinalPhaseDerivate[#1][#2]{g_{\rm i}^{#1}(#2)}
\def\mappingFunction[#1]{{I_{\referenceDerivative}(#1)}}
\def\algorithmDecoder{A}
\def\weightedSequence[#1]{\textit{{\textbf{s}}}_{#1}}
\def\weightedChannel[#1]{\textit{{\textbf{h}}}_{#1}}
\def\numberOfSequences{L}
\def\setOfweightedSequence{\textit{\textbf{S}}_{\numberOfIterations}}
\def\setOfweightedChannel{\textit{\textbf{H}}_{\numberOfIterations}}
\def\setOfpermutations{\textit{\textbf{L}}_{\numberOfIterations}}
\def\setOfweightedSequenceSub{\textit{\textbf{S}}_{\numberOfIterations-1}}
\def\setOfweightedChannelSub{\textit{\textbf{H}}_{\numberOfIterations-1}}
\def\setOfpermutationsSub{\textit{\textbf{L}}_{\numberOfIterations-1}}
\def\lengthOfSequence{L}
\def\indexOptimum{\tilde{n}}
\def\indexEnumaration{n}
\def\numberOfGoodSeqences{{N_{\rm best}}}
\def\EbNO{E_{\rm b}/N_{0}}
\def\indexRecursion{i}
\def\IEEEsubmission{0}
\def\totalLength{M}
\def\remaningLength{Z}
\def\remaningLengthAnother{Y}
\def\integerToBePartitioned{P}
\def\partition[#1]{p_{#1}}
\def\partition[#1]{p_{#1}}
\def\separationFreq[#1]{{s_{#1}}}	
\def\separationFreqFix{s'}
\def\separationFreqHelp[#1]{z_{#1}}	
\def\cardinality[#1][#2]{{\mathcal{A}}^{(#1)}\left(#2\right)}
\def\cardinalityP[#1][#2]{{\mathcal{B}}^{(#1)}\left(#2\right)}
\def\cardinalityDistance[#1][#2]{{\mathcal{B}}^{(#1)}\left(#2\right)}
\def\indexForCardinality{i}
\def\numberOfBitsOnPartitions{n_{\text{supp}}}
\def\numberOfBitsOnNZ{n_{\text{non-zero}}}
\def\numberOfBitsOnTotal{n_{\text{total}}}
\def\bitsForSep{\textit{\textbf{b}}_{{\rm \pi}, {\rm supp} }}
\def\bitsForSepD{{\hat{\textit{\textbf{b}}}}_{{\rm \pi}, {\rm supp} }}
\def\bitsForPhaseD{{\hat{\textit{\textbf{b}}}}_{\rm non\text{-}zero}}
\def\lowerBoundMinimumDistance{d_{\rm lb}}
\def\bitsForPhase{\textit{\textbf{b}}_{\rm non\text{-}zero}}
\def\decimalForSepAndPerm{i_{{\rm \pi}, {\rm supp}}}
\def\decimalForPerm{i_{{\rm \pi}}}
\def\decimalForSep{i_{{\rm supp}}}
\def\decimalForSepD{\hat{i}_{{\rm supp}}}
\def\decimalForPermD{\hat{i}_{{\rm \pi}}}
\def\decimalForSepAndPermD{\hat{i}_{{\rm \pi}, {\rm supp}}}
\def\minimumDistanceNZ{d_{\rm non\text{-}zero}}
\def\minimumDistancePart{d_{\rm supp}}
\def\minimumDistance{d_{\rm min}}
\def\magnitudePSK{r}
\def\spacing{l}
\def\operatorSupport[#1]{{\rm supp}({#1})}
\def\integerToBeMapped{n}
\def\algorithmEncoder[#1]{{{\epsilon}_{\rm 1}}(#1)}
\def\algorithmDecoder[#1]{{\epsilon}_{\rm 1}^{-1}(#1)}
\def\algorithmEncoderP[#1]{{{\epsilon}_{\rm 2}}(#1)}
\def\algorithmDecoderP[#1]{{\epsilon}_{\rm 2}^{-1}(#1)}
\def\algorithmEncoderDeltaSeparation[#1]{{\epsilon}_{\rm 3}(#1)}
\def\algorithmDecoderDeltaSeparation[#1]{{\epsilon}_{\rm 3}^{-1}(#1)}
\def\NsumAll{{n{(c)}}}
\def\NsumAllGiven[#1]{{n{(#1)}}}
\def\chosenReferencePointOther{{c}}
\def\algorithmDecoderFinal[#1]{\text{dec}(#1)}
\def\seqW[#1]{\textit{\textbf{w}}_{#1}}
\def\seqS{\textbf{\textit{s}}}
\def\seqStilde{{\hat{{\bm{s}}}}}
\def\referenceDerivative{\ell}
\def\receivedElement[#1]{r_{#1}}
\def\channel[#1]{c_{#1}}
\def\noise[#1]{n_{#1}}
\def\parametersImag{\theta}
\def\parametersImagg{{{\psi}}}
\def\parametersImagf{{{\phi}}}
\def\parametersImagEstimate{{\hat{\theta}}}
\def\weightedElement[#1]{w_{#1}}
\def\weightedElementP[#1]{v_{#1}}
\def\channelPowerElement[#1]{h_{#1}}
\def\weightedElementAfterSum[#1]{w_{#1}^{\referenceDerivative}}
\def\weightedElementAfterSumML[#1]{\hat{w}_{#1}^{\referenceDerivative}}
\def\channelPowerElementAfterSum[#1][#2]{h_{#1}^{#2}}
\def\funcfForFinalAmplitudeDerivate[#1][#2]{f_{\rm r}^{#1}(#2)}
\def\funcfForFinalPhaseDerivate[#1][#2]{f_{\rm i}^{#1}(#2)}
\def\funcfForFinalPhaseDerivateDec[#1][#2]{\check{f}_{\rm i}^{#1}(#2)}
\def\funcgForFinalAmplitudeDerivate[#1][#2]{g_{\rm r}^{#1}(#2)}
\def\funcgForFinalPhaseDerivate[#1][#2]{g_{\rm i}^{#1}(#2)}
\def\funcgForFinalPhaseDerivateDec[#1][#2]{\check{g}_{\rm i}^{#1}(#2)}
\def\mappingFunction[#1]{{M_{\referenceDerivative}(#1)}}
\def\weightedSequence[#1]{\textit{{\textbf{w}}}_{#1}}
\def\weightedSequenceSub[#1]{\textit{{\textbf{w}}}_{#1}'}
\def\weightedChannel[#1]{{{{q}}}_{#1}}
\def\weightedChannelSub[#1]{{{{q}}}_{#1}'}
\def\permSequence[#1]{{{{l}}}_{#1}}
\def\permSequenceSub[#1]{{{{l}}}_{#1}'}
\def\numberOfSequences{N}
\def\setOfweightedSequence{\textit{\textbf{W}}_{\numberOfIterations}}
\def\setOfweightedChannel{\textit{\textbf{q}}_{\numberOfIterations}}
\def\setOfpermutations{\textit{\textbf{l}}_{\numberOfIterations}}
\def\setOfweightedSequenceSub{\textit{\textbf{W}}_{\numberOfIterations-1}}
\def\setOfweightedChannelSub{\textit{\textbf{q}}_{\numberOfIterations-1}}
\def\setOfpermutationsSub{\textit{\textbf{l}}_{\numberOfIterations-1}}
\def\indexOptimum{\hat{n}_{\numberOfIterations}}
\def\indexOptimumPre{\hat{n}_{\numberOfIterations-1}}
\def\indexEnumaration{i}
\def\phaseOffsetl[#1]{\Delta_{#1}}
\def\phaseOffsetll[#1]{\Delta_{#1}}
\def\amplitudeOffset[#1]{\epsilon'_{#1}}
\def\amplitudeOffsetl[#1]{\epsilon_{#1}}
\def\numberOfGoodSeqences{{N_{\rm best}}}
\def\numberOfSelectSeqences{{N_{\rm select}}}
\def\numberOfSelectSeqencesMax{{N_{\rm max}}}
\def\setOfgoodSequenceIndexSub{\textit{\textbf{n}}_{\numberOfIterations-1}}
\def\EbNO{E_{\rm b}/N_{0}}
\def\indexRecursion{i}
\def\functionIndex[#1][#2]{I_{#1}(#2)}
\def\channelSum[#1]{q_{#1}}
\def\referencePoint[#1]{r_{#1}}
\def\roomForAReferencePoint[#1]{Z(#1)}
\newcommand\mydots{\hbox to 1em{.\hss.\hss.}}
\tikzset{%
  remember picture with id/.style={%
    remember picture,
    overlay,
    save picture id=#1,
  },
  save picture id/.code={%
    \edef\pgf@temp{#1}%
    \immediate\write\pgfutil@auxout{%
      \noexpand\savepointas{\pgf@temp}{\pgfpictureid}}%
  },
  if picture id/.code args={#1#2#3}{%
    \@ifundefined{save@pt@#1}{%
      \pgfkeysalso{#3}%
    }{
      \pgfkeysalso{#2}%
    }
  }
}
\def\savepointas#1#2{%
  \expandafter\gdef\csname save@pt@#1\endcsname{#2}%
}
\def\tmk@labeldef#1,#2\@nil{%
  \def\tmk@label{#1}%
  \def\tmk@def{#2}%
}
\newcounter{hatchNumber}
\DeclarePairedDelimiter\floor{\lfloor}{\rfloor}
\newif\ifAC@uppercase@first%
\def\Aclp#1{\AC@uppercase@firsttrue\aclp{#1}\AC@uppercase@firstfalse}%
\def\AC@aclp#1{%
	\ifcsname fn@#1@PL\endcsname%
	\ifAC@uppercase@first%
	\expandafter\expandafter\expandafter\MakeUppercase\csname fn@#1@PL\endcsname%
	\else%
	\csname fn@#1@PL\endcsname%
	\fi%
	\else%
	\AC@acl{#1}s%
	\fi%
}%
\def\Acp#1{\AC@uppercase@firsttrue\acp{#1}\AC@uppercase@firstfalse}%
\def\AC@acp#1{%
	\ifcsname fn@#1@PL\endcsname%
	\ifAC@uppercase@first%
	\expandafter\expandafter\expandafter\MakeUppercase\csname fn@#1@PL\endcsname%
	\else%
	\csname fn@#1@PL\endcsname%
	\fi%
	\else%
	\AC@ac{#1}s%
	\fi%
}%
\def\Acfp#1{\AC@uppercase@firsttrue\acfp{#1}\AC@uppercase@firstfalse}%
\def\AC@acfp#1{%
	\ifcsname fn@#1@PL\endcsname%
	\ifAC@uppercase@first%
	\expandafter\expandafter\expandafter\MakeUppercase\csname fn@#1@PL\endcsname%
	\else%
	\csname fn@#1@PL\endcsname%
	\fi%
	\else%
	\AC@acf{#1}s%
	\fi%
}%
\def\Acsp#1{\AC@uppercase@firsttrue\acsp{#1}\AC@uppercase@firstfalse}%
\def\AC@acsp#1{%
	\ifcsname fn@#1@PL\endcsname%
	\ifAC@uppercase@first%
	\expandafter\expandafter\expandafter\MakeUppercase\csname fn@#1@PL\endcsname%
	\else%
	\csname fn@#1@PL\endcsname%
	\fi%
	\else%
	\AC@acs{#1}s%
	\fi%
}%
\edef\AC@uppercase@write{\string\ifAC@uppercase@first\string\expandafter\string\MakeUppercase\string\fi\space}%
\def\AC@acrodef#1[#2]#3{%
	\@bsphack%
	\protected@write\@auxout{}{%
		\string\newacro{#1}[#2]{\AC@uppercase@write #3}%
	}\@esphack%
}%
\def\Acl#1{\AC@uppercase@firsttrue\acl{#1}\AC@uppercase@firstfalse}
\def\Acf#1{\AC@uppercase@firsttrue\acf{#1}\AC@uppercase@firstfalse}
\def\Ac#1{\AC@uppercase@firsttrue\ac{#1}\AC@uppercase@firstfalse}
\def\Acs#1{\AC@uppercase@firsttrue\acs{#1}\AC@uppercase@firstfalse}
\newtheorem{theorem}{Theorem}
\newtheorem{definition}{Definition}
\newtheorem{lemma}{Lemma}
\newtheorem{corollary}{Corollary}
\newtheorem{example}{\color{black} Example} 
\acrodef{CRC}{cyclic redundancy check}
\acrodef{UL}{uplink}
\acrodef{SIC}{successive interference cancellation}
\acrodef{PAPR}{peak-to-average-power ratio}
\acrodef{PMEPR}{peak-to-mean-envelope-power ratio}
\acrodef{AACF}{aperiodic auto-correlation function}
\acrodef{OFDM}{orthogonal frequency division multiplexing}
\acrodef{DFT}{discrete Fourier transform}
\acrodef{DC}{direct current}
\acrodef{CS}{complementary sequence}
\acrodef{GCP}{Golay complementary pair}
\acrodef{ANF}{algebraic normal form}
\acrodef{PSK}{phase shift keying}
\acrodef{QAM}{quadrature amplitude modulation}
\acrodef{QPSK}{quadrature phase-shift keying}
\acrodef{GDJ}{Golay-Davis-Jedwab}
\acrodef{FFT}{fast Fourier transform}
\acrodef{BER}{bit-error ratio}
\acrodef{SNR}{signal-to-noise ratio}
\acrodef{4G}{Fourth Generation}
\acrodef{5G}{Fifth Generation}
\acrodef{NR}{New Radio}
\acrodef{LTE}{Long-Term Evolution}
\acrodef{PTS}{partial transmit sequences}
\acrodef{PSD}{power spectral density}
\acrodef{LDPC}{low-density parity check}
\acrodef{SE}{spectral efficiency}
\acrodef{eLAA}{enhanced licensed-assisted access}
\acrodef{NR-U}{NR operation in unlicensed bands}
\acrodef{RM}{Reed-Muller}
\acrodef{AE}{autoencoder}
\acrodef{DNN}{deep neural network}
\acrodef{OFDM-AE}{OFDM-based autoencoder}
\acrodef{DL}{deep learning}
\acrodef{CP}{cyclic prefix}
\acrodef{AWGN}{additive white Gaussian noise}
\acrodef{P2C}{polar-to-Cartesian}
\acrodef{CFR}{channel frequency response}
\acrodef{ReLU}{rectified linear unit}
\acrodef{MMSE}{minimum mean sqaure error}
\acrodef{BPSK}{binary phase-shift keying}
\acrodef{BLER}{block error rate}
\acrodef{ML}{maximum-likelihood}
\acrodef{PHY}{physical layer}
\acrodef{PA}{power amplifier}
\acrodef{IDFT}{inverse DFT}
\acrodef{DoF}{degrees-of-freedom}
\acrodef{IoT}{Internet-of-Things}
\acrodef{DFT-s-OFDM}{\ac{DFT}-spread \ac{OFDM}}
\acrodef{MMSE}{minimum mean square error}
\acrodef{FDE}{frequency-domain equalization}
\acrodef{FrFT}{fractional Fourier transform}
\acrodef{TF}{time-frequency}
\acrodef{BFSK}{binary frequency-shift keying}
\acrodef{CSS}{chirp spread spectrum}
\acrodef{BCSS}{binary chirp spread spectrum}
\acrodef{EVA}{Extended Vehicular A}
\acrodef{MIMO}{multi-input multi-output}
\acrodef{PIC}{parallel interference cancellation}
\acrodef{LoRa}{Long Range}
\acrodef{HF}{high-frequency}
\acrodef{FDSS}{frequency-domain spectral shaping}
\acrodef{CSC}{circularly-shifted chirp}
\acrodef{ISI}{inter-symbol interference}
\acrodef{DFRC}{dual-function radar and communication}
\acrodef{IM}{index modulation}
\acrodef{OFDM-IM}{OFDM with \ac{IM}}
\acrodef{NBN}{non-branching node}
\acrodef{CM}{composition modulation}
\acrodef{WCM}{weak composition modulation}
\acrodef{ICM}{index and composition modulation}
\acrodef{SPM}{set partition modulation}
\acrodef{SC}{single-carrier}
\acrodef{SC-IM}{single-carrier waveform with IM}
\acrodef{CSC-IM}{circularly-shifted chirps with IM}
\acrodef{OOK}{on-off keying}
\acrodef{AI}{artificial intelligence}
\acrodef{NN}{neural network}
\acrodef{CSS}{complementary sequence set}
\begin{document}
\title{ 
Encoding and Decoding with Partitioned Complementary Sequences for Low-PAPR OFDM
}
\author{Alphan~\c{S}ahin
\thanks{The author is with the University of South Carolina, Columbia, SC. E-mail: asahin@mailbox.sc.edu.}
}


\maketitle

\begin{abstract}
In this study, we propose partitioned \acp{CS} where the gaps between the clusters encode information bits to achieve low \ac{PAPR} \ac{OFDM} symbols. 
We show that the partitioning rule without losing the feature of being a \ac{CS} coincides with the non-squashing partitions of a positive integer and leads to a symmetric separation of clusters.
We analytically derive the number of partitioned \acp{CS} for  given bandwidth and a minimum distance constraint and obtain the corresponding recursive  methods for enumerating the values of separations. We show that partitioning can increase the \ac{SE} without changing the alphabet of the non-zero elements of the \ac{CS}, i.e., standard \acp{CS} relying on \ac{RM} code.
We also develop an encoder for partitioned \acp{CS} and a maximum-likelihood-based recursive decoder for \ac{AWGN} and fading channels. Our results indicate that the partitioned \acp{CS}  under a minimum distance constraint can perform similar to the standard \acp{CS} in terms of average \ac{BLER} and provide a higher \ac{SE} at the expense of a limited \ac{SNR} loss.
\end{abstract}
\begin{IEEEkeywords}
	Complementary sequences, integer compositions, non-squashing partitions, peak-to-average-power ratio, Reed-Muller code. 
\end{IEEEkeywords}
\acresetall

\section{Introduction}

High \ac{PAPR}  is a long-lasting problem of an \ac{OFDM} transmission. Among many other \ac{PAPR} mitigation methods \cite{Wunder_2013, Rahmatallah_2013}, \acp{CS}, introduced by Marcel Golay \cite{Golay_1961}, allow one to limit the peak instantaneous power of \ac{OFDM} signals without any optimization method \cite{Popovic_1991}. 
In \cite{davis_1999}, Davis and Jedwab showed that $\numberOfIterations!/2\cdot2^{h(\numberOfIterations+1)}$ \acp{CS} of length $2^\numberOfIterations$  for $h\in\integersPositive$ occur as the elements of the cosets of the first-order \ac{RM} code within the second-order \ac{RM} code. Hence, they obtained a notable coding scheme guaranteeing a low \ac{PAPR} for \ac{OFDM} symbols while providing good error correction capability.  The set of \acp{CS} based on Davis and Jedwab's construction is often referred to as {\em \ac{GDJ}} sequences or {\em standard} \acp{CS}.

One drawback of the code proposed in \cite{davis_1999} is the low \ac{SE}. To address this issue, one direction is to obtain \acp{CS} that cannot be generated through the method in \cite{davis_1999}, i.e., {\em non-standard} sequences. In \cite{Li_2005}, it was shown that some Boolean functions containing third-order monomials can lead to \acp{CS}. Another direction is to synthesize \acp{CS} where their elements belong to a larger alphabet such as \ac{QAM} constellation. Li showed that there exist at least $[(\numberOfIterations+1)4^{2(q-1)}-(m+1)4^{(q-1)}+2^{q-1}](m!/2)4^{(m+1)}$ \acp{CS} with $4^q$-QAM alphabet \cite{Li_2010}, which generalizes the results in earlier work in \cite{robing_2001,Chong_2002,Chong_2003,Lee_2006,Chang_2010,Li_2008} by using a method called {\em offset method}. In \cite{budisin_2018}, \ac{QAM} \acp{CS} were synthesized by indexing the elements of the unitary matrices and using the properties of Gaussian integers. 
Another approach that has recently attracted significant interest is to construct \acp{CSS} or complete complementary code as discussed in \cite{Chen_2018, Das_2020, Wu_2020} and the references therein, which relax the maximum peak instantaneous power of \acp{CS}, but are useful to obtain many sequences with large zero-correlation zones.
Although the aforementioned constructions are remarkable, they often do not reveal the encoding and decoding procedures.
In \cite{sahin_2020gm}, a \ac{CS} construction that utilizes different pseudo-Boolean functions for the magnitude and phase of the elements of the synthesized \ac{CS}, the support, and  the seed \ac{GCP} was proposed. Since this construction generalizes Davis and Jedwab's method through independently configurable functions, it allows one to  develop an encoder and a decoder for \acp{CS} while enabling  \acp{CS} with zero-valued elements and \acp{CS} with uniform and non-uniform constellations. This construction was used to develop a neural-network-based encoder and a decoder in \cite{sahin_ICC2020} and a low-\ac{PAPR} multi-user scheme in the uplink for the interlaced allocation in 3GPP \ac{5G} \ac{NR-U} \cite{sahin_TWC2020}.

Developing a low-\ac{PAPR} encoder based on \acp{CS} is not  a straightforward task since a set of different \acp{CS} is typically constrained in terms of size, sequence length, and alphabet. For example, \acp{CS} with a high-order modulation can alter the mean \ac{OFDM} symbol power and cause a \ac{PAPR} larger than $3$~dB when the entire transmission is considered for the average power calculation. Although it is possible to address this issue by constraining the magnitude of the elements of \acp{CS} as done in \cite{sahin_2020gm}, this issue increases the design complexity  and decreases the number of different \acp{CS}. The performance of \acp{CS} with a high-order modulation can also be worse than the one for \acp{CS} with a \ac{PSK} alphabet, e.g., standard sequences, since a high-order modulation often decreases the minimum Euclidean distance of the set of \acp{CS}. It is also challenging to enumerate \acp{CS} with an arbitrary length. For example, if the  \ac{DoF} are different from a typical \ac{CS} length, i.e., $2^\numberOfIterations$, the available \ac{DoF} are not fully exploited to increase the data rate or reduce the error rate. In this study, we aim to address these problems by partitioning the standard \acp{CS} based on the information bits by using the theoretical framework in \cite{sahin_2020gm}, and use \acp{CS} with zero-valued elements. It is worth noting that the \acp{CS} with zero-valued elements are known in the literature. However, they are primarily used to address resource allocation in an \ac{OFDM} symbol as in \cite{sahin_2020gm,sahin_TWC2020}, and \cite{Sahin_2019ba}. In \cite{Zhou_CSS_2018}, the  \acp{CSS} with zero elements are generated through an iterative method with the motivation of cognitive radio applications. To the best of our knowledge, the systematic design of the zero-valued elements of a \ac{CS} to transmit extra information bits   has not been discussed in the literature.



The proposed concept is inherently related to the \ac{IM}.
 \Ac{IM} is a subclass of permutation modulation \cite{Slepian_1965} and allows one to encode information in the order of discrete objects. 
Turning on and off the antennas for transmitting extra information, i.e., spatial modulation \cite{Mesleh_2008}, adjusting the on/off status of available RF mirrors and encoding information on the antenna pattern, i.e., media-based modulation \cite{Khandani_2013},  activating/deactivating \ac{OFDM} subcarriers with modulation symbols, i.e., \ac{OFDM-IM} \cite{basar_2013}, are a few examples of the many variants of \ac{IM}. In \cite{Jaradat_2020}, extra information bits were transmitted by exploiting the gap between the active subcarriers in each sub-block of \ac{OFDM-IM}. However, the gaps are not tied to a special structure and associated encoding/decoding operations based on structured gaps are not discussed. Since \ac{OFDM-IM} is also based on \ac{OFDM}, it also suffers from high \ac{PAPR}. In \cite{Vora_2018}, circular-time shifts are applied to \ac{OFDM-IM} to reduce the \ac{PAPR}  for multiple antennas. In \cite{Kim_2019} and \cite{Zheng_2017}, the dither signals are considered to reduce the \ac{PAPR} of \ac{OFDM-IM} symbols by using convex optimization techniques. In \cite{Sugiura_2017} and \cite{Nakao_2017}, a \ac{SC-IM} is investigated for achieving a low \ac{PAPR} transmission, which eliminates data-dependent optimization techniques. It was shown that the \ac{IM}  can slightly degrade the \ac{PAPR} benefit of a typical \ac{SC} transmission with low-order modulations. This is expected as \ac{IM} increases the zero crossings in the time domain for \ac{SC}. In the extreme cases where there is a large \ac{DoF} for indices with a few index choices, the \ac{PAPR} of \ac{SC-IM} symbols can even be worse than that of \ac{OFDM-IM} due to pulses in an \ac{SC} scheme \cite{Safi_2020_CCNC}. 
To the best of our knowledge, \acp{CS} have not been systematically investigated from the \ac{IM} perspective  in the literature.

\subsection{Contributions and Organization}

{\textbf{Comprehensive analysis of partitioned CSs:} }
We analytically derive the number of partitioned \acp{CS} for a given bandwidth. 
 We show that the number of \acp{CS} increases by a large factor when  the partitioning is taken into account and  the alphabet of the non-zero elements of the \ac{CS} remains the same. We also derive the algorithms that map a natural number to the separations between the clusters or vice versa, which are needed for an encoder and a decoder based on partitioned \acp{CS}.

{\textbf{Partitioning under a minimum distance constraint:} }  To obtain the partitioned \acp{CS} under a minimum Euclidean distance constraint, we propose a partitioning strategy based on the symmetric structure of the partitioned \acp{CS}. For a given minimum distance, we derive the cardinality of the partitioned \acp{CS} and the algorithms that construct a bijective mapping between a natural number and the separations between the clusters. We show that partitioned \acp{CS} can maintain the distance properties of the standard \acp{CS} while offering a similar \ac{SE} and a better flexibility in bandwidth.

{\textbf{Encoder/decoder for partitioned \acp{CS}:} } We develop an encoder and a  recursive decoder for partitioned \acp{CS} and compare the partitioned \acp{CS} with the standard \acp{CS} and polar code in various configurations. 

The rest of the paper is organized as follows. In Section \ref{sec:preliminaries}, preliminary discussions on \acp{CS} and notations are provided. In Section \ref{sec:CSwithNS}, partitioning for \acp{CS} is investigated. The cardinality of the partitioned \acp{CS} under a minimum distance constraint and the algorithms for enumerating the partitions are discussed.  In Section \ref{sec:encAndDec}, we use the partitioned \acp{CS} to obtain low-\ac{PAPR} \ac{OFDM} symbols and present the associated encoder and decoder. In Section~\ref{sec:numeric}, we evaluate the encoder and decoder and compare partitioned \acp{CS} with the standard \acp{CS} and polar code. We conclude the paper in Section \ref{sec:conc} with final remarks.

{\em Notation:} The sets of complex numbers, real numbers,  integers,  non-negative integers,  positive integers (i.e., natural numbers), and integers modulo $\numberOfPointsForPSK$ are denoted by $\complexNumbers$,  $\realNumbers$,  $\integers$, $\integersNonnegative$, $\integersPositive$, and $\integers_\numberOfPointsForPSK$, respectively. The set of $\numberOfIterations$-dimensional integers where each element is in $\integers_\numberOfPointsForPSK$  is denoted by $\integers^\numberOfIterations_\numberOfPointsForPSK$.  
The assignment operation is denoted by $\leftarrow$.  The constant $\constantj$ is $\sqrt{-1}$.

\section{Preliminaries and Definitions}
\label{sec:preliminaries}

\subsection{Sequences, OFDM, and PMEPR}
Let $\seqGa=(\eleGa[{\indexEleOfSeq}])_{i=0}^{\lengthOfSequence-1}\triangleq(\eleGa[0],\eleGa[1],\dots, \eleGa[\lengthOfSequence-1])$ be a sequence  of length $\lengthOfSequence$,  where $\eleGa[{\indexEleOfSeq}]\in\complexNumbers$ and $\eleGa[\lengthOfSequence-1]\neq0$. We associate the sequence $\seqGa$ with the polynomial
$\polySeq[\seqGaP][\polyVariable] = \eleGa[\lengthOfSequence-1]\polyVariable^{\lengthOfSequence-1} + \eleGa[\lengthOfSequence-2]\polyVariable^{\lengthOfSequence-2}+ \dots + \eleGa[0]$
in indeterminate $\polyVariable$. The \ac{AACF} of the sequence $\seqGa$ is defined as
\begin{align}
	\apac[\seqGa][\lagForCorrelation]\triangleq
	\begin{cases}
		\sum_{\indexEleOfSeq=0}^{\lengthOfSequence-\lagForCorrelation-1} \eleGa[\indexEleOfSeq]^*\eleGa[\indexEleOfSeq+\lagForCorrelation], & 0\le\lagForCorrelation\le\lengthOfSequence-1\\
		\sum_{\indexEleOfSeq=0}^{\lengthOfSequence+\lagForCorrelation-1} \eleGa[\indexEleOfSeq]\eleGa[\indexEleOfSeq-\lagForCorrelation]^*, & -\lengthOfSequence+1\le\lagForCorrelation<0\\
		0,& \text{otherwise}
	\end{cases}~.
\end{align}

We express the  continuous-time baseband \ac{OFDM} symbol generated from the sequence $\seqGa$  as $\OFDMinTime[\seqGa][\timeVar]=\sum_{\indexEleOfSeq=0}^{\lengthOfSequence-1}\eleGa[{\indexEleOfSeq}]\constante^{\constantj2\pi\indexEleOfSeq\frac{\timeVar}{\symbolDuration}}$ for $\timeVar\in[0,\symbolDuration)$, where $\symbolDuration$ is the symbol duration. The {instantaneous envelope power} of the baseband \ac{OFDM} symbol can be examined by evaluating the polynomial $\polySeq[\seqGaP][\polyVariable]$ at $|\polyVariable|=1$  since $\OFDMinTime[\seqGa][\timeVar] = \polySeq[\seqGaP][{\constante^{\constantj\frac{2\pi\timeVar}{\symbolDuration}}}]$. 

In this study, the \ac{PMEPR} of an admissible sequence  $\seqGc\in\complexNumbers^{\lengthOfSequence}$ in a code is defined as
$\max_{\timeVar\in[0, \symbolDuration)}|\OFDMinTime[\seqGc][\timeVar]|^2/{\Ptransmit}
$, where $\Ptransmit=\expectationOperator[{\apac[\seqGc][0]}][\seqGc]$ is a constant that depends on the code. Note that  \ac{PAPR} measures the peakness of the signal at the analog front-end as $\max_{\timeVar\in[0, \symbolDuration)}\Re\{\OFDMinTime[\seqGc][\timeVar]\constante^{2\pi\carrierFrequency\timeVar}\}^2/{\Ptransmit}
$, where $\carrierFrequency$ is the carrier frequency  \cite{Sharif_2003}. As the \ac{PMEPR} is an upper bound for the \ac{PAPR} and its calculation is easier than \ac{PAPR}, we use \ac{PMEPR} to quantify the peakness of the signals in this study.  
\subsection{Sequence Synthesis}
\label{subsec:synthesis}
Let $\funcfForANF$ be a function that maps from $\integers^\numberOfIterations_2=\{(\monomial[1],\monomial[2],\dots, \monomial[\numberOfIterations])| \monomial[\indexFirstOrderMonomial]\in\integers_2\}$ to $\integers_\numberOfPointsForPSK$, where $\numberOfPointsForPSK$ is an integer. The function $\funcfForANF$ can be expressed as a linear combination of the monomials over $\integers_\numberOfPointsForPSK$, i.e.,
\begin{align}
	\hspace{-2mm}\funcfForANF(\seqx)= \sum_{\indexMonomial=0}^{2^\numberOfIterations-1} \coeffientsANF[\indexMonomial]\prod_{\indexFirstOrderMonomial=1}^{\numberOfIterations} \monomial[\indexFirstOrderMonomial]^{\orderMonomial[\indexFirstOrderMonomial]} = \coeffientsANF[0]1+ \dots+ \coeffientsANF[2^\numberOfIterations-1]\monomial[1]\monomial[2]\mydots\monomial[\numberOfIterations]~,
	\label{eq:ANF}
\end{align} 
where $\seqx\triangleq(\monomial[1],\monomial[2],\dots, \monomial[\numberOfIterations])$ and $\coeffientsANF[\indexMonomial]\in \integers_\numberOfPointsForPSK$ for $\indexMonomial = \sum_{\indexFirstOrderMonomial=1}^{\numberOfIterations}\orderMonomial[\indexFirstOrderMonomial]2^{\numberOfIterations-\indexFirstOrderMonomial}$ and $\orderMonomial[\indexFirstOrderMonomial]\in\integers_2$ (i.e., the coefficient of ($\indexMonomial+1$)th monomial $\monomial[1]^{\orderMonomial[1]}\monomial[2]^{\orderMonomial[2]}\mydots\monomial[\numberOfIterations]^{\orderMonomial[\numberOfIterations]}$  belongs to $\integers_\numberOfPointsForPSK$). If the monomial coefficients are in $\integersNonnegative$, the co-domain of  the function $\funcfForANF$ is $\integersNonnegative$. In this study, $\funcfForANFdec$ is defined as  $\funcfForANFdec(\varMonomial)\triangleq\funcfForANF\circ\funcEnumInv(\varMonomial)$, where $\varMonomial=\funcEnum(\seqx) \triangleq   \sum_{\indexFirstOrderMonomial=1}^{\numberOfIterations}\monomial[\indexFirstOrderMonomial]2^{\numberOfIterations-\indexFirstOrderMonomial}$. In other words, $\funcfForANFdec(\varMonomial)=\funcfForANF(\seqx)$, where $\varMonomial$ is the decimal representation of the binary number constructed using all elements in the sequence $\seqx$, where $\monomial[1]$ is the most significant bit.

Let $\funcfForFinalPhase:\integers^\numberOfIterations_2\rightarrow\integers_\numberOfPointsForPSK$ and $\funcfForCommonShift:\integers^\numberOfIterations_2\rightarrow\integersNonnegative$. We synthesize a unimodular complex sequence $\seqGa=(\eleGa[{\indexEleOfSeq}])_{\varMonomial=0}^{2^\numberOfIterations-1}$ by listing the values of $\eleGa[\varMonomial]=\exponentialBase^{\constantj \funcfForFinalPhaseDec(\varMonomial)}$ for $\exponentialBase\triangleq\constante^{\frac{2\pi}{\numberOfPointsForPSK}}$. By modifying the polynomial
associated with the sequence $\seqGa$, we generate a new sequence $\seqGc=(\eleGc[{\indexEleOfSeq}])_{\varMonomial=0}^{\lengthOfSequence-1}$ with zero elements as
\begin{align}
	\polySeq[\seqGcP][\polyVariable] =\sum_{\varMonomial=0}^{\lengthOfSequence-1} \eleGc[\varMonomial]\polyVariable^{\varMonomial}=\sum_{\varMonomial=0}^{2^\numberOfIterations-1} \eleGa[\varMonomial]\polyVariable^{\funcfForCommonShiftDec(\varMonomial)+\varMonomial}\nonumber~,
\end{align}
for $\lengthOfSequence\ge2^\numberOfIterations$. 

In this study, we define the support of the sequence $\seqGc$ as $\operatorSupport[{\seqGc}]\triangleq\{\varMonomial\in \integers_\lengthOfSequence|\eleGc[\varMonomial]\neq0\}$.

\subsection{Complementary Sequences}
The sequence pair  $(\seqGa,\seqGb)$ is  a \ac{GCP} if $\apac[\seqGa][\lagForCorrelation]+\apac[\seqGb][\lagForCorrelation] = 0$ for $\lagForCorrelation\neq0$ and
the sequences $\seqGa$ and $\seqGb$ are referred to as \acp{CS}. In this study, we  use a specific \ac{CS} construction that leads to \acp{CS} with zero elements, based on Theorem 2 discussed in \cite{sahin_2020gm}:
\begin{theorem}[\cite{sahin_2020gm}]
	\label{th:reduced}
	Let 
	$\seqPermutationCompShift=(\permutationMono[\indexIteration])_{\indexIteration=1}^{\numberOfIterations}$ be a sequence defined by a permutation of $\{1,2,\dots,\numberOfIterations\}$. For any $\separationGolayFix,\separationGolay[\indexIteration]\in \integersNonnegative$, $\numberOfPointsForPSK\in\integersPositive$, and $\angleexpAll[\indexIteration],\arbitraryPhaseK \in \integers_\numberOfPointsForPSK$ for $\indexIteration=1,2,\mydots,\numberOfIterations$, let
	\begin{align}
		&\funcfForFinalPhase(\seqx)
		= {\frac{\numberOfPointsForPSK}{2}\sum_{\indexIteration=1}^{\numberOfIterations-1}\monomial[{\permutationMono[{\indexIteration}]}]\monomial[{\permutationMono[{\indexIteration+1}]}]}+\sum_{\indexIteration=1}^\numberOfIterations \angleexpAll[\indexIteration]\monomial[{\permutationMono[{\indexIteration}]}]+  \arbitraryPhaseK\label{eq:imagPartReduced}~,
		\\	
		&\funcfForCommonShift(\seqx) =\sum_{\indexIteration=1}^\numberOfIterations\separationGolay[\indexIteration]\monomial[{{\indexIteration}}]+\separationGolayFix~.
		\label{eq:shift}
	\end{align}
	Then, the sequence $\seqGt=(\eleGt[\varMonomial])_{\varMonomial=0}^{\lengthOfSequence-1}$, where  its associated polynomial  is  given by
	\begin{align}
		\polySeq[{\seqGtP}][\polyVariable] &= 
		\sum_{\varMonomial=0}^{2^\numberOfIterations-1}
		\exponentialBase^{\constantj \funcfForFinalPhaseDec(\varMonomial)}
		\polyVariable^{\funcfForCommonShiftDec(\varMonomial)+\varMonomial}~,\label{eq:encodedFOFDMonly}
	\end{align}
	is a \ac{CS} of length $\lengthOfSequence=2^\numberOfIterations+\separationGolayFix+\sum_{\indexIteration=1}^\numberOfIterations\separationGolay[\indexIteration]$.
\end{theorem}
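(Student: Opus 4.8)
The plan is to produce an explicit companion sequence $\bar{\seqGt}$ and show that $(\seqGt,\bar{\seqGt})$ is a \ac{GCP}, i.e., $\apac[\seqGt][\lagForCorrelation]+\apac[{\bar{\seqGt}}][\lagForCorrelation]=0$ for all $\lagForCorrelation\neq0$; that $\seqGt$ is a \ac{CS} then follows by definition. I would argue entirely with the associated polynomials restricted to the unit circle. For any length-$\lengthOfSequence$ sequence $\seqGa$, one has $\polySeq[\seqGaP][\polyVariable]\,\overline{\polySeq[\seqGaP][\polyVariable]}=\sum_{\lagForCorrelation=-(\lengthOfSequence-1)}^{\lengthOfSequence-1}\apac[\seqGa][\lagForCorrelation]\,\polyVariable^{\lagForCorrelation}$ for $|\polyVariable|=1$ — exactly the instantaneous-envelope-power expansion already used in Section~\ref{sec:preliminaries}. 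Hence, once a companion $\bar{\seqGt}$ of length at most $\lengthOfSequence$ is in hand, the \ac{GCP} property is \emph{equivalent} to $|\polySeq[\seqGtP][\polyVariable]|^2+|\polySeq[{\bar{\seqGtP}}][\polyVariable]|^2$ being a constant on $|\polyVariable|=1$ (the Fourier modes $\polyVariable^{\lagForCorrelation}$, $|\lagForCorrelation|\le\lengthOfSequence-1$, are linearly independent on the circle, so a constant value kills every nonzero-lag coefficient of the sum). The theorem thus reduces to one computation on the unit circle, plus a degree count for $\lengthOfSequence$.

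To set up the computation I would rewrite $\polySeq[\seqGtP][\polyVariable]$ from \eqref{eq:encodedFOFDMonly} in a nested Rudin--Shapiro form. Substituting $\varMonomial=\funcEnum(\seqx)=\sum_{\indexIteration=1}^\numberOfIterations\monomial[\indexIteration]2^{\numberOfIterations-\indexIteration}$, the term indexed by $\seqx=(\monomial[1],\dots,\monomial[\numberOfIterations])$ carries the power $\funcfForCommonShift(\seqx)+\varMonomial=\separationGolayFix+\sum_{\indexIteration=1}^{\numberOfIterations}\monomial[\indexIteration]\bigl(2^{\numberOfIterations-\indexIteration}+\separationGolay[\indexIteration]\bigr)$ and the phase $\funcfForFinalPhase(\seqx)$ from \eqref{eq:imagPartReduced}. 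Write $\polyVariableItBud_\indexIteration\triangleq2^{\numberOfIterations-\indexIteration}+\separationGolay[\indexIteration]$, relabel $u_\indexChoosen\triangleq\monomial[{\permutationMono[\indexChoosen]}]$ so that the permuted order $\permutationMono[1],\dots,\permutationMono[\numberOfIterations]$ becomes $1,\dots,\numberOfIterations$, and use $\exponentialBase^{\constantj\numberOfPointsForPSK/2}=-1$; pulling out the common factor $\exponentialBase^{\constantj\arbitraryPhaseK}\polyVariable^{\separationGolayFix}$ gives
\begin{align}
	\polySeq[\seqGtP][\polyVariable]=\exponentialBase^{\constantj\arbitraryPhaseK}\polyVariable^{\separationGolayFix}\sum_{u\in\integers_2^\numberOfIterations}(-1)^{\sum_{\indexChoosen=1}^{\numberOfIterations-1}u_\indexChoosen u_{\indexChoosen+1}}\prod_{\indexChoosen=1}^{\numberOfIterations}\bigl(\exponentialBase^{\constantj\angleexpAll[\indexChoosen]}\polyVariable^{\polyVariableItBud_{\permutationMono[\indexChoosen]}}\bigr)^{u_\indexChoosen}~.\label{eq:plan-nested}
\end{align}
Setting $t_\indexChoosen\triangleq\exponentialBase^{\constantj\angleexpAll[\indexChoosen]}\polyVariable^{\polyVariableItBud_{\permutationMono[\indexChoosen]}}$ and defining $Q_\indexChoosen(\polyVariable;\epsilon)$, $\epsilon\in\integers_2$, by $Q_0(\polyVariable;0)=Q_0(\polyVariable;1)=1$ and
\begin{align}
	Q_\indexChoosen(\polyVariable;0)=Q_{\indexChoosen-1}(\polyVariable;0)+t_\indexChoosen Q_{\indexChoosen-1}(\polyVariable;1),\qquad Q_\indexChoosen(\polyVariable;1)=Q_{\indexChoosen-1}(\polyVariable;0)-t_\indexChoosen Q_{\indexChoosen-1}(\polyVariable;1),\label{eq:plan-rec}
\end{align}
unrolling \eqref{eq:plan-rec} shows that the sum in \eqref{eq:plan-nested} equals $Q_\numberOfIterations(\polyVariable;0)$ — here it is essential that the factors are accumulated in the permuted order, which is exactly what makes the cross terms \emph{consecutive}, i.e.\ of the form $\sum_\indexChoosen u_\indexChoosen u_{\indexChoosen+1}$. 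I would then take $\polySeq[{\bar{\seqGtP}}][\polyVariable]\triangleq\exponentialBase^{\constantj\arbitraryPhaseK}\polyVariable^{\separationGolayFix}Q_\numberOfIterations(\polyVariable;1)$; comparing \eqref{eq:plan-rec} with \eqref{eq:plan-nested}, $Q_\numberOfIterations(\polyVariable;1)$ is the same construction with $\angleexpAll[\numberOfIterations]$ replaced by $\angleexpAll[\numberOfIterations]+\numberOfPointsForPSK/2$, so $\bar{\seqGt}$ is again an admissible sequence of the form in the theorem, with the same length.

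The key step is a parallelogram identity. From \eqref{eq:plan-rec}, $|Q_\indexChoosen(\polyVariable;0)|^2+|Q_\indexChoosen(\polyVariable;1)|^2=2\bigl(|Q_{\indexChoosen-1}(\polyVariable;0)|^2+|t_\indexChoosen|^2|Q_{\indexChoosen-1}(\polyVariable;1)|^2\bigr)$, which reduces to $2\bigl(|Q_{\indexChoosen-1}(\polyVariable;0)|^2+|Q_{\indexChoosen-1}(\polyVariable;1)|^2\bigr)$ whenever $|t_\indexChoosen|=1$. On $|\polyVariable|=1$ this hypothesis is automatic, because $\polyVariableItBud_{\permutationMono[\indexChoosen]}=2^{\numberOfIterations-\permutationMono[\indexChoosen]}+\separationGolay[{\permutationMono[\indexChoosen]}]\ge1>0$ forces $|t_\indexChoosen|=|\polyVariable|^{\polyVariableItBud_{\permutationMono[\indexChoosen]}}=1$ for every choice of the separations $\separationGolay[\indexIteration]$ — this is precisely why introducing the gaps does not spoil the complementary property. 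Starting from the value $2$ at $\indexChoosen=0$ and iterating $\numberOfIterations$ times gives $|Q_\numberOfIterations(\polyVariable;0)|^2+|Q_\numberOfIterations(\polyVariable;1)|^2=2^{\numberOfIterations+1}$ on $|\polyVariable|=1$, hence $|\polySeq[\seqGtP][\polyVariable]|^2+|\polySeq[{\bar{\seqGtP}}][\polyVariable]|^2=2^{\numberOfIterations+1}$ there; by the reduction of the first paragraph, $(\seqGt,\bar{\seqGt})$ is a \ac{GCP} and $\seqGt$ is a \ac{CS}. For the length, every exponent in \eqref{eq:plan-nested} satisfies $\polyVariableItBud_\indexIteration>0$, so the top power $\separationGolayFix+\sum_{\indexIteration=1}^{\numberOfIterations}(2^{\numberOfIterations-\indexIteration}+\separationGolay[\indexIteration])=2^{\numberOfIterations}-1+\separationGolayFix+\sum_{\indexIteration=1}^{\numberOfIterations}\separationGolay[\indexIteration]$ is attained only at $u=(1,\dots,1)$ with a nonzero coefficient, which yields $\lengthOfSequence=2^{\numberOfIterations}+\separationGolayFix+\sum_{\indexIteration=1}^{\numberOfIterations}\separationGolay[\indexIteration]$ as claimed.

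The one subtlety worth flagging is that the map $\varMonomial\mapsto\varMonomial+\funcfForCommonShift(\seqx)$ need not be injective (the weights $2^{\numberOfIterations-\indexIteration}+\separationGolay[\indexIteration]$ may coincide), so $\polySeq[\seqGtP][\polyVariable]$ can merge several of the $2^\numberOfIterations$ monomials into a single coefficient $\eleGt[\varMonomial]$; the argument above is unaffected because it is phrased in terms of $\polySeq[\seqGtP][\polyVariable]$ as a polynomial on $|\polyVariable|=1$ and never refers to the individual $\eleGt[\varMonomial]$. I expect the only genuine (but routine) effort to be the unrolling of \eqref{eq:plan-rec} so that its exponents and cross terms match \eqref{eq:encodedFOFDMonly}--\eqref{eq:imagPartReduced} exactly; the rest is the two-line parallelogram induction and the degree count. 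An equivalent alternative would be to induct on $\numberOfIterations$ via the Golay concatenation lemma with a controllable delay, processing the variables in the order $\permutationMono[1],\permutationMono[2],\dots$, but the generating-function route above is shorter and makes the roles of $\separationGolay[\indexIteration]\ge0$ and $\exponentialBase^{\constantj\numberOfPointsForPSK/2}=-1$ transparent.
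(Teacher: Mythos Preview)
Your argument is correct. The nested Rudin--Shapiro rewriting in \eqref{eq:plan-nested}--\eqref{eq:plan-rec} is sound (one checks by induction that $Q_\indexChoosen(\polyVariable;\epsilon)=\sum_{u_1,\dots,u_\indexChoosen}(-1)^{\epsilon u_\indexChoosen+\sum_{j<\indexChoosen}u_ju_{j+1}}\prod_{j\le\indexChoosen}t_j^{u_j}$), the parallelogram identity together with $|t_\indexChoosen|=1$ on $|\polyVariable|=1$ yields the constant $2^{\numberOfIterations+1}$, and the degree count for $\lengthOfSequence$ is correct since the top exponent is uniquely attained at $u=(1,\dots,1)$ with unimodular coefficient regardless of collisions at lower powers.

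The paper, however, does not supply its own proof of this theorem: it is stated as a specialization of \cite[Theorem~2]{sahin_2020gm}, obtained by setting the amplitude functions and the seed \ac{GCP} in that more general construction to~$1$, adjoining the pure shift $\separationGolayFix$, and restricting $\arbitraryPhaseK,\angleexpAll[\indexIteration]$ to $\integers_\numberOfPointsForPSK$. So the ``paper's proof'' is a citation plus a specialization remark. Your route is therefore genuinely different in presentation: rather than invoking the general offset/seed framework of \cite{sahin_2020gm} and specializing, you give a self-contained Golay--Rudin--Shapiro induction directly on the polynomials. What your approach buys is transparency---the roles of $\exponentialBase^{\constantj\numberOfPointsForPSK/2}=-1$ (consecutive cross terms) and of $\separationGolay[\indexIteration]\ge0$ (unimodular $t_\indexChoosen$ on the unit circle) are isolated in single lines---and independence from the heavier machinery in \cite{sahin_2020gm}. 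What the paper's route buys is brevity here and generality elsewhere: the cited theorem simultaneously covers nontrivial seed pairs and amplitude shaping, which your argument as written does not.
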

For Theorem~\ref{th:reduced}, we set the polynomials related to the magnitude of the elements of the \ac{CS} and the seed \ac{GCP} in  \cite[Theorem~2]{sahin_2020gm} to $1$. 
In addition, we introduce $\separationGolayFix$ that prepends $\separationGolayFix$ zero-valued elements to the synthesized sequence without changing the properties of \acp{CS}. We also  set the domain of $\arbitraryPhaseK$ and $\angleexpAll[\indexIteration]$ to $\integers_\numberOfPointsForPSK$ for $\indexIteration\in\{1,2,\mydots,\numberOfIterations\}$, i.e., \eqref{eq:imagPartReduced} leads to a second-order \ac{RM} code with $\numberOfPointsForPSK$-\ac{PSK} alphabet   \cite{davis_1999}.

If ${\funcfForCommonShiftDec(\varMonomial)+\varMonomial}$ and ${\funcfForCommonShiftDec(j)+j}$ are identical  for  $\varMonomial\neq j$, $\exponentialBase^{\constantj \funcfForFinalPhaseDec(\varMonomial)}$ and $\exponentialBase^{\constantj \funcfForFinalPhaseDec(j)}$ are superposed on the $({\funcfForCommonShiftDec(\varMonomial)+\varMonomial})$th element of the sequence. In  \cite{sahin_2020gm}, the conditions given by
\begin{align}
	\separationGolay[\ell] \ge	\separationGolay[\ell+1]+\separationGolay[\ell+2]+\cdots+ \separationGolay[\numberOfIterations] ~, 
	\label{eq:conditionNoOverlapping}
\end{align}
for  $1\le \ell\le \numberOfIterations-1$ are proposed to avoid such superpositions. In this study, a \ac{CS} is called a {\em non-partitioned \ac{CS}} if $\separationGolay[\indexIteration]=0$ for $\indexIteration\in\{1,2,\mydots,\numberOfIterations\}$. The set of non-partitioned $\numberOfPointsForPSK$-\ac{PSK} \acp{CS} based on \eqref{eq:imagPartReduced}  is referred to as {\em standard \acp{CS}}. If the conditions in \eqref{eq:conditionNoOverlapping} hold,  the resulting sequences  for any $\separationGolay[1],\mydots,\separationGolay[\indexIteration]\ge0$  are {\em partitioned \acp{CS}}.

\subsection{Non-squashing Partitions}
\label{subsec:nonsquashing}
A partition of a positive integer $\integerToBePartitioned$ into $\numberOfIterations$ parts, i.e., $\integerToBePartitioned=\partition[1]+\partition[2]+\cdots+\partition[\numberOfIterations]$, is defined as {\em non-squashing} \cite{sloane2005} if $\partition[\ell]\ge\partition[\ell+1]+\partition[\ell+2]+\cdots+\partition[\numberOfIterations]$  holds for $1\le\ell\le\numberOfIterations-1$ and $\partition[\indexIteration]\in\integersPositive$ for $\indexIteration=1,2,\mydots,\numberOfIterations$. 
The term of non-squashing partition was first coined in \cite{sloane2005} for  a problem called {\em box-stacking problem}. Assume that there are $\numberOfIterations$ boxes where each box is labeled as $1,2,\mydots,\numberOfIterations$. Suppose that the $\indexIteration$th
box can support a total weight of $\indexIteration$ grams. How many different ways of putting the boxes in a single stack are there such that no box will be squashed by the weight of the boxes above it? The solution to this problem and a bijective mapping between binary partitions \cite{Hirschhorn_2004}  and non-squashing partitions  were discussed in \cite{sloane2005}, which also led to various generalizations on non-squashing partitions \cite{sun_2018, Folsom_2016,Andrews_2007}.

The condition given in \eqref{eq:conditionNoOverlapping} to eliminate the superposition of elements of a \ac{CS} interestingly coincides with the non-squashing partitions of $\integerToBePartitioned=\sum_{\indexIteration=1}^\numberOfIterations\separationGolay[\indexIteration]$ into $\numberOfIterations$ parts, except that $\separationGolay[\indexIteration]$ can be equal to zero. Hence, $(\separationGolay[1],\separationGolay[2],\mydots,\separationGolay[\numberOfIterations])$ can be considered as a sequence 
where $(\separationGolay[1],\separationGolay[2],\mydots,\separationGolay[k])$ is a non-squashing partition of $\integerToBePartitioned=\sum_{\indexIteration=1}^k\separationGolay[\indexIteration]\ge1$ into  $k$ parts, $\separationGolay[\indexIteration]\neq0$ for $\indexIteration\in\{1,\mydots,k\}$ and $\separationGolay[k+1]=\cdots=\separationGolay[\numberOfIterations]=0$ for a given $k\in\{1,\mydots,\numberOfIterations\}$. For $k=0$, $\separationGolay[\indexIteration]=0$ for $\indexIteration\in\{1,2,\mydots,\numberOfIterations\}$.

\section{Partitioned Complementary Sequences}
\label{sec:CSwithNS}
\subsection{Problem Statement}
Consider a baseband \ac{OFDM} signal $\OFDMinTime[\seqGt][\timeVar]$, where the associated polynomial for the sequence $\seqGt$ is \eqref{eq:encodedFOFDMonly}.  Let $\totalLength$ be the number of available subcarriers  for $\totalLength\ge\lengthOfSequence\ge2^\numberOfIterations$. Assume that the conditions in \eqref{eq:conditionNoOverlapping} hold true. In this case, the number of zero-valued subcarriers in the frequency domain can be quantified as $\remaningLength=\totalLength-2^{\numberOfIterations}$ since
the number of non-zero elements of the sequence $\seqGt$ is $2^{\numberOfIterations}$ in Theorem~\ref{th:reduced}.
In this study, we  exploit the  $\remaningLength$ zero-valued subcarriers for encoding extra information bits by manipulating the support of the sequence $\seqGt$ with \eqref{eq:shift}, where the non-zero elements of sequence $\seqGt$ are from the standard \acp{CS} formed by \eqref{eq:imagPartReduced}. 

\label{subsec:problem}

\begin{figure}[t]
	\centering
	{\includegraphics[width =3.5in]{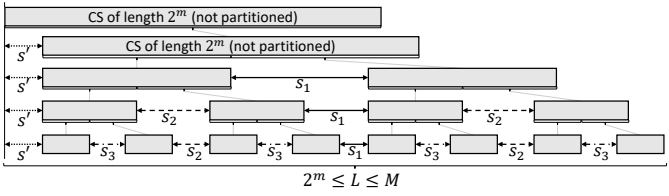}
	}
	\caption{The support of the CS changes based on the values of $\separationFreqFix$ and $\separationFreq[\indexIteration]$ for $\indexIteration=\{1,2,3\}$, where the number of zero-valued elements is $\separationFreqFix+\separationFreq[1]+2\separationFreq[2]+4\separationFreq[3]$. The number of zeroes can be chosen arbitrarily without affecting the properties of a \ac{CS} since the partitioning follows \eqref{eq:shift}.}
	\label{fig:partitioning}
\end{figure}
In Theorem~\ref{th:reduced}, the support of the sequence $\seqGt$ is determined by the values of $\separationGolayFix$ and $\separationGolay[\indexIteration]$ for $\indexIteration\in\{1,2,\mydots,\numberOfIterations\}$ based on the function in \eqref{eq:shift}. Hence, in order to maintain the properties of \acp{CS}, the partitioning cannot be done arbitrarily. While $\separationGolayFix$ shifts the sequence by prepending zeroes to the sequence, the impact of $\separationGolay[\indexIteration]$ is governed by the monomial $\monomial[\indexIteration]$.  For example, $\separationGolay[1]$ causes the last $2^{\numberOfIterations-1}$ elements to be shifted by $\separationGolay[1]$ as the last $2^{\numberOfIterations-1}$ elements of the corresponding sequence for the monomial $\monomial[1]$ in \eqref{eq:shift}  is $1$. For $\separationGolay[2]$, two clusters of size $2^{\numberOfIterations-2}$ are shifted by $\separationGolay[2]$ since the corresponding sequence for $\monomial[2]$ has two clusters of size $2^{\numberOfIterations-2}$. Under the condition given in \eqref{eq:conditionNoOverlapping}, the impact of $\separationGolayFix$ and $\separationGolay[\indexIteration]$ for $\indexIteration\in\{1,2,\mydots,\numberOfIterations\}$ on the support can be equivalently characterized by the distance between the non-zero clusters.  Due to the enumeration of $\monomial[1],\mydots,\monomial[\numberOfIterations]$ in lexicographic order, the support of the sequence $\seqGt$ changes symmetrically as illustrated in \figurename~\ref{fig:partitioning}. For instance, a standard \ac{CS} constructed with \eqref{eq:imagPartReduced} can be partitioned in two clusters and the distance between the clusters can be arbitrarily chosen as $\separationFreq[1]\in\integersNonnegative$ without losing the features of \acp{CS}. Each cluster can also be  divided into two sub-clusters equivalently and the distance between two sub-clusters can be controlled by another parameter $\separationFreq[2]\in\integersNonnegative$. The same partitioning process can be continued till the $\numberOfIterations$th step and  the amount of separation at the $\indexIteration$th step can be analytically expressed as
\begin{align}
\separationFreq[\indexIteration]=\separationGolay[\indexIteration]-\sum_{\indexForCardinality=\indexIteration+1}^{\numberOfIterations}\separationGolay[\indexForCardinality]~,
\label{eq:separationClosed}
\end{align}
for $\indexIteration=1,2,\mydots,\numberOfIterations$, which defines a bijective mapping between $(\separationFreq[1],\separationFreq[2],\mydots,\separationFreq[\numberOfIterations])$ and $(\separationGolay[1],\separationGolay[2],\mydots,\separationGolay[\numberOfIterations])$.

Given  $\remaningLength=\totalLength-2^{\numberOfIterations}$ zero-valued subcarriers, the first question we need to address is how many different  $(\separationGolayFix,\separationGolay[1],\separationGolay[2],\cdots,\separationGolay[\numberOfIterations])$ sequences exist under the condition in \eqref{eq:conditionNoOverlapping}. The same problem can equivalently be stated as the number of different $(\separationFreqFix,\separationFreq[1],\mydots,\separationFreq[\numberOfIterations])$ sequences such that 
\begin{align}
\separationFreqFix+\separationFreq[1]2^{0}+\separationFreq[2]2^{1}+\cdots+\separationFreq[\numberOfIterations]2^{\numberOfIterations-1}\le\remaningLength~,
\label{eq:conditionSeparation}
\end{align}
where $\separationFreqFix=\separationGolayFix\in\integersNonnegative$ and $\separationFreq[\indexIteration]\in\integersNonnegative$ for $\indexIteration=1,2,\mydots,\numberOfIterations$. This problem is directly related to the subsequent questions: 
\begin{enumerate}
	\item How many extra information bits can be transmitted by changing the support of the synthesized \acp{CS} within the $\totalLength$ subcarriers in the frequency domain?	
	\item For a given $\totalLength$, how to choose $\numberOfIterations$ such that the number of synthesized \acp{CS} with partitioning is maximum?
	\item Can the partitioning maintain the coding gain accomplished for the non-zero elements through the function in \eqref{eq:imagPartReduced}?
	\item What are encoding and decoding procedures if the support of the sequence changes based on information bits?
\end{enumerate}

\subsection{Cardinality Analysis}
Let ${\numberOfEnumarationSequences[\numberOfIterations]}$ denotes the number of different partitioned $\numberOfPointsForPSK$-\ac{PSK} \acp{CS} based on Theorem~\ref{th:reduced}, where their lengths are less than or equal to $\totalLength$.  It is well-known that the number of $\numberOfPointsForPSK$-\ac{PSK} \acp{CS} through the function given in \eqref{eq:imagPartReduced} is
\begin{align}
\unitSequences\triangleq \begin{cases}
\numberOfPointsForPSK^{\numberOfIterations+1}\numberOfIterations!/{2}, & \numberOfIterations>0\\
\numberOfPointsForPSK, & \numberOfIterations=0
\end{cases}~,
\end{align}
where $\numberOfPointsForPSK$ is an even positive integer \cite{davis_1999,paterson_2000}. Let $\cardinalityP[\numberOfIterations][\remaningLength]$ and  $\cardinality[\numberOfIterations][\remaningLengthAnother]$  be the number of distinct sequences $(\separationFreqFix,\separationFreq[1],\separationFreq[2],\cdots,\separationFreq[\numberOfIterations])$ under the condition  in \eqref{eq:conditionSeparation} for a given $\remaningLength$ and the number of distinct sequences $(\separationFreq[1],\separationFreq[2],\cdots,\separationFreq[\numberOfIterations])$ such that $\sum_{\indexIteration=1}^\numberOfIterations\separationFreq[\indexIteration]2^{\indexIteration-1}\le\remaningLengthAnother\in\integersNonnegative$, respectively. ${\numberOfEnumarationSequences[\numberOfIterations]}$ can then be obtained as follows:
\begin{theorem}
\label{co:numberOfSequences}
Let $\numberOfIterations,\remaningLengthAnother\in\integersNonnegative$ and $\numberOfPointsForPSK,\totalLength\in\integersPositive$.
	\begin{align}
		{\numberOfEnumarationSequences[\numberOfIterations]}= \cardinalityP[\numberOfIterations][\totalLength-2^{\numberOfIterations}]\unitSequences~,
		\label{eq:finalCardinality}
	\end{align}
where
	\begin{align}
		\cardinalityP[\numberOfIterations][\remaningLength] =\frac{1}{2}\cardinality[\numberOfIterations+1][2\remaningLength+1]~,
		\label{eq:cardinalityPFinal}
	\end{align}	
and
	\begin{align}
		\cardinality[\numberOfIterations][\remaningLengthAnother]=\begin{cases}
			\sum_{\indexForCardinality=0}^{\remaningLengthAnother} \cardinality[\numberOfIterations-1][\floor*{\frac{\indexForCardinality}{2}}],& \numberOfIterations>1  \\
			\remaningLengthAnother+1, & \numberOfIterations=1 
		\end{cases}~.
		\label{eq:theoremSeperationCardinality}
	\end{align}
\end{theorem}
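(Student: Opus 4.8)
The plan is to prove the three identities of Theorem~\ref{co:numberOfSequences} from the inside out: first the recursion \eqref{eq:theoremSeperationCardinality} for $\cardinality[\numberOfIterations][\remaningLengthAnother]$, then the relation \eqref{eq:cardinalityPFinal} between $\cardinalityP[\numberOfIterations][\remaningLength]$ and $\cardinality[\numberOfIterations+1][\cdot]$, and finally the product formula \eqref{eq:finalCardinality}. By definition $\cardinality[\numberOfIterations][\remaningLengthAnother]$ is the number of $\numberOfIterations$-tuples $(\separationFreq[1],\dots,\separationFreq[\numberOfIterations])$ of non-negative integers with $\sum_{n=1}^{\numberOfIterations}\separationFreq[n]2^{n-1}\le\remaningLengthAnother$, and $\cardinalityP[\numberOfIterations][\remaningLength]$ counts the $(\numberOfIterations+1)$-tuples $(\separationFreqFix,\separationFreq[1],\dots,\separationFreq[\numberOfIterations])$ of non-negative integers satisfying \eqref{eq:conditionSeparation}. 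For $\numberOfIterations=1$ the first constraint reduces to $\separationFreq[1]\le\remaningLengthAnother$, giving the base value $\remaningLengthAnother+1$. For $\numberOfIterations>1$ I would condition on the smallest-weight variable, $\separationFreq[1]=\indexForCardinality\in\{0,1,\dots,\remaningLengthAnother\}$: the remaining variables must then satisfy $\sum_{n=2}^{\numberOfIterations}\separationFreq[n]2^{n-1}\le\remaningLengthAnother-\indexForCardinality$, and, the left-hand side being even, this is equivalent to $\sum_{n=2}^{\numberOfIterations}\separationFreq[n]2^{n-2}\le\floor*{(\remaningLengthAnother-\indexForCardinality)/2}$; reindexing $n\mapsto n-1$ shows the number of admissible $(\separationFreq[2],\dots,\separationFreq[\numberOfIterations])$ equals $\cardinality[\numberOfIterations-1][\floor*{(\remaningLengthAnother-\indexForCardinality)/2}]$. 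Summing over $\indexForCardinality$ and substituting $\indexForCardinality\mapsto\remaningLengthAnother-\indexForCardinality$ yields \eqref{eq:theoremSeperationCardinality}.

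For \eqref{eq:cardinalityPFinal} I would produce an explicit two-to-one map. Given a tuple $(\separationFreqHelp[1],\dots,\separationFreqHelp[\numberOfIterations+1])$ counted by $\cardinality[\numberOfIterations+1][2\remaningLength+1]$, set $\separationFreqFix\leftarrow\floor*{\separationFreqHelp[1]/2}$, $\varepsilon\leftarrow\separationFreqHelp[1]-2\separationFreqFix\in\{0,1\}$, and $\separationFreq[n]\leftarrow\separationFreqHelp[n+1]$ for $n=1,\dots,\numberOfIterations$. Then $\sum_{n=1}^{\numberOfIterations+1}\separationFreqHelp[n]2^{n-1}=2\bigl(\separationFreqFix+\sum_{n=1}^{\numberOfIterations}\separationFreq[n]2^{n-1}\bigr)+\varepsilon$, so, the quantity in parentheses being an integer, the bound $\sum_{n}\separationFreqHelp[n]2^{n-1}\le2\remaningLength+1$ holds if and only if $\separationFreqFix+\sum_{n}\separationFreq[n]2^{n-1}\le\remaningLength$, irrespective of $\varepsilon$. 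Hence $(\separationFreqHelp[1],\dots,\separationFreqHelp[\numberOfIterations+1])\mapsto\bigl((\separationFreqFix,\separationFreq[1],\dots,\separationFreq[\numberOfIterations]),\varepsilon\bigr)$ is a bijection from the set counted by $\cardinality[\numberOfIterations+1][2\remaningLength+1]$ onto the Cartesian product of the set counted by $\cardinalityP[\numberOfIterations][\remaningLength]$ with $\{0,1\}$, which gives $\cardinality[\numberOfIterations+1][2\remaningLength+1]=2\,\cardinalityP[\numberOfIterations][\remaningLength]$.

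For \eqref{eq:finalCardinality} I would observe that a partitioned \ac{CS} obtained from Theorem~\ref{th:reduced} is specified by two independent ingredients: the phase function \eqref{eq:imagPartReduced}, which determines the $2^{\numberOfIterations}$ non-zero values and ranges over the $\unitSequences$ standard $\numberOfPointsForPSK$-\ac{PSK} \acp{CS} \cite{davis_1999,paterson_2000}, and the shift tuple $(\separationGolayFix,\separationGolay[1],\dots,\separationGolay[\numberOfIterations])$ of \eqref{eq:shift}, which determines where these values are placed. By \eqref{eq:encodedFOFDMonly} and \eqref{eq:shift}, the non-zero positions are $\{\separationGolayFix+\sum_{n=1}^{\numberOfIterations}\monomial[n]\bigl(\separationGolay[n]+2^{\numberOfIterations-n}\bigr):(\monomial[1],\dots,\monomial[\numberOfIterations])\in\integers^{\numberOfIterations}_2\}$, and \eqref{eq:conditionNoOverlapping} is precisely the condition that the weights $\separationGolay[n]+2^{\numberOfIterations-n}$ form a super-increasing sequence; hence these $2^{\numberOfIterations}$ positions are distinct, and by the standard decodability of super-increasing subset sums they determine $\separationGolayFix$ (their minimum) and then $(\separationGolay[1],\dots,\separationGolay[\numberOfIterations])$. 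Moreover the position is strictly increasing in the enumeration index $\varMonomial$, so from the sequence one first recovers $\operatorSupport[{\seqGt}]$, hence the shift tuple, and then the ordered non-zero values, hence the value table of the phase function; since the $\unitSequences$ standard \acp{CS} have pairwise distinct value tables, distinct ingredient pairs produce distinct partitioned \acp{CS}, and $\numberOfEnumarationSequences[\numberOfIterations]$ equals the number of admissible shift tuples times $\unitSequences$. To count the admissible shift tuples I would invoke the bijection \eqref{eq:separationClosed} between $(\separationGolay[1],\dots,\separationGolay[\numberOfIterations])$ obeying \eqref{eq:conditionNoOverlapping} and arbitrary $\numberOfIterations$-tuples $(\separationFreq[1],\dots,\separationFreq[\numberOfIterations])$ of non-negative integers, combined with $\separationFreqFix=\separationGolayFix$ and the telescoping identity $\sum_{n=1}^{\numberOfIterations}\separationGolay[n]=\sum_{n=1}^{\numberOfIterations}\separationFreq[n]2^{n-1}$, so that the length bound $\lengthOfSequence=2^{\numberOfIterations}+\separationGolayFix+\sum_{n}\separationGolay[n]\le\totalLength$ becomes exactly \eqref{eq:conditionSeparation} with $\remaningLength=\totalLength-2^{\numberOfIterations}$; thus the admissible shift tuples number $\cardinalityP[\numberOfIterations][\totalLength-2^{\numberOfIterations}]$, and \eqref{eq:finalCardinality} follows.

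The two counting identities should be routine once one conditions on $\separationFreq[1]$ and on the parity of $\separationFreqHelp[1]$, respectively. I expect the main obstacle to be the product formula: the care there lies in justifying the super-increasing/subset-sum argument so that the shift tuple is genuinely recoverable from the support (so that multiplying by $\unitSequences$ neither over- nor under-counts), and in verifying the telescoping identity $\sum_{n}\separationGolay[n]=\sum_{n}\separationFreq[n]2^{n-1}$ that turns the length bound into \eqref{eq:conditionSeparation}.
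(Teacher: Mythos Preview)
Your proof is correct, and the recursion argument for \eqref{eq:theoremSeperationCardinality} matches the paper's. The one genuine difference is your treatment of \eqref{eq:cardinalityPFinal}: the paper first conditions on $\separationFreqFix$ to write $\cardinalityP[\numberOfIterations][\remaningLength]=\sum_{\indexForCardinality=0}^{\remaningLength}\cardinality[\numberOfIterations][\indexForCardinality]$, then recognises this sum as $\tfrac{1}{2}\sum_{\indexForCardinality=0}^{2\remaningLength+1}\cardinality[\numberOfIterations][\floor*{\indexForCardinality/2}]=\tfrac{1}{2}\cardinality[\numberOfIterations+1][2\remaningLength+1]$ via the already-established \eqref{eq:theoremSeperationCardinality}, whereas your parity-of-$\separationFreqHelp[1]$ map gives a direct two-to-one correspondence and bypasses the intermediate sum. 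Both are short; the paper's route has the side benefit that the intermediate identity $\cardinalityP[\numberOfIterations][\remaningLength]=\sum_{\indexForCardinality}\cardinality[\numberOfIterations][\remaningLength-\indexForCardinality]$ is exactly what Algorithm~\ref{alg:enum} later exploits inside $\algorithmEncoderP[\cdot,\cdot,\cdot]$ and $\algorithmDecoderP[\cdot,\cdot,\cdot]$. For \eqref{eq:finalCardinality} you are in fact more thorough than the paper, which simply asserts that the phase data $\{\seqPermutationCompShift,\arbitraryPhaseK,\angleexpAll[1],\dots,\angleexpAll[\numberOfIterations]\}$ and the shift tuple act independently; your super-increasing argument showing that the support recovers $(\separationGolayFix,\separationGolay[1],\dots,\separationGolay[\numberOfIterations])$, together with the telescoping check $\sum_{\indexIteration}\separationGolay[\indexIteration]=\sum_{\indexIteration}\separationFreq[\indexIteration]2^{\indexIteration-1}$, supplies the injectivity the paper leaves implicit.
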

\begin{proof}
The number of zero-valued elements is  $\remaningLength=\totalLength-2^{\numberOfIterations}$. 
Therefore, for a given set of 
$\{\seqPermutationCompShift,\angleexpAll[1],\mydots,\angleexpAll[\indexIteration],\arbitraryPhaseK\}$, there exist $\cardinalityP[\numberOfIterations][\totalLength-2^{\numberOfIterations}]$ distinct sequences for $(\separationFreqFix,\separationFreq[1],\separationFreq[2],\cdots,\separationFreq[\numberOfIterations])$. The parameter
 $\separationFreqFix$ can range  from $0$ to  $\remaningLength$. Hence, by the definitions of $\cardinality[\numberOfIterations][\remaningLengthAnother]$ and  $\cardinalityP[\numberOfIterations][\remaningLength]$,  $\cardinalityP[\numberOfIterations][\remaningLength]$ can be expressed as
\begin{align}
	\cardinalityP[\numberOfIterations][\remaningLength] = \sum_{\indexForCardinality=0}^{\remaningLength}\cardinality[\numberOfIterations][\remaningLength-\indexForCardinality]=\sum_{\indexForCardinality=0}^{\remaningLength}\cardinality[\numberOfIterations][\indexForCardinality]~.
	\label{eq:cardinalityP}
\end{align}	
The condition $\sum_{\indexIteration=1}^\numberOfIterations\separationFreq[\indexIteration]2^{\indexIteration-1}\le\remaningLengthAnother$ can be re-written as
\begin{align}
	\sum_{\indexIteration=2}^{\numberOfIterations}\separationFreq[\indexIteration]2^{\indexIteration-2}\le\frac{\remaningLengthAnother-\separationFreq[1]}{2}~,
	\label{eq:conditionRewritten}
\end{align}
where $\separationFreq[1]$ can be any integer between $0$ and $\remaningLengthAnother$. The cardinality of the sequences $(\separationFreq[2],\cdots,\separationFreq[\numberOfIterations])$ under the condition in \eqref{eq:conditionRewritten} can be expressed as $\cardinality[\numberOfIterations-1][\floor*{{(\remaningLengthAnother-\separationFreq[1])}/{2}}]$. 
Therefore,
\begin{align}
	\cardinality[\numberOfIterations][\remaningLengthAnother]=\sum_{\indexForCardinality=0}^{\remaningLengthAnother} \cardinality[\numberOfIterations-1][\floor*{\frac{\remaningLengthAnother-\indexForCardinality}{2}}]~.\label{eq:Arecursive}
\end{align}
For $\numberOfIterations=1$, $\cardinality[1][\remaningLengthAnother]=\remaningLengthAnother+1$ since there is only $\separationFreq[1]$ that can be any integer  between $0$ and $\remaningLengthAnother$. Also, \eqref{eq:cardinalityP} can be re-expressed as
\begin{align}
	\sum_{\indexForCardinality=0}^{\remaningLength}\cardinality[\numberOfIterations][\indexForCardinality]
	=\frac{1}{2}\sum_{\indexForCardinality=0}^{2\remaningLength+1}\cardinality[\numberOfIterations][\floor*{{\indexForCardinality}/{2}}]=\frac{1}{2}\cardinality[\numberOfIterations+1][2\remaningLength+1]~. \nonumber
\end{align}
\end{proof}

\begin{figure*}
	\centering
	\subfloat[The total number of bits encoded with the non-zero elements and the supports of \acp{CS} ($\numberOfBitsOnTotal$).]{\includegraphics[width =3.5in]{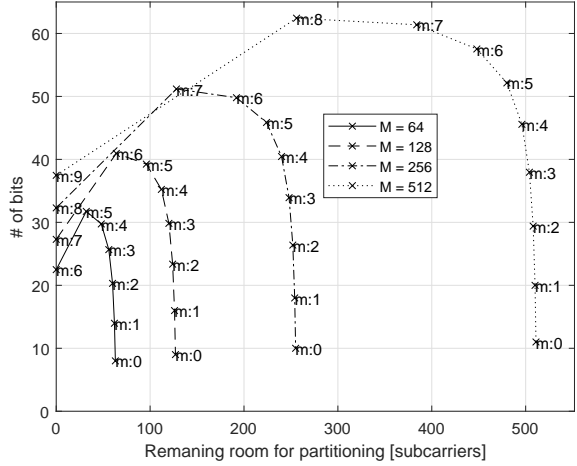}
		\label{subfig:bitsOnPartAndNZ}}
	\subfloat[The total number of bits encoded with the supports of \acp{CS} ($\numberOfBitsOnPartitions$).]{\includegraphics[width =3.5in]{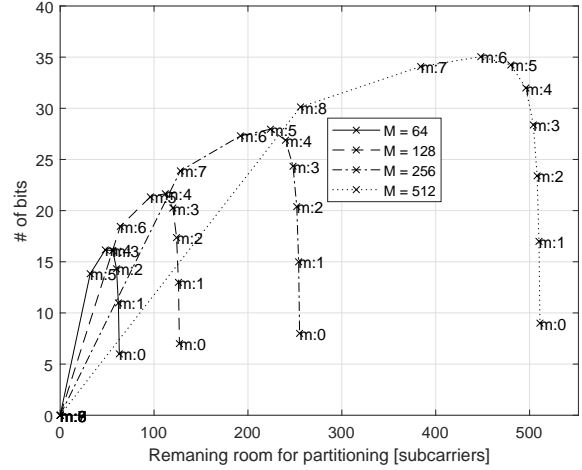}
		\label{subfig:bitsOnPart}}
	\\
	\subfloat[The total number of bits encoded with the non-zero elements of \acp{CS} ($\numberOfBitsOnNZ$).]{\includegraphics[width =3.5in]{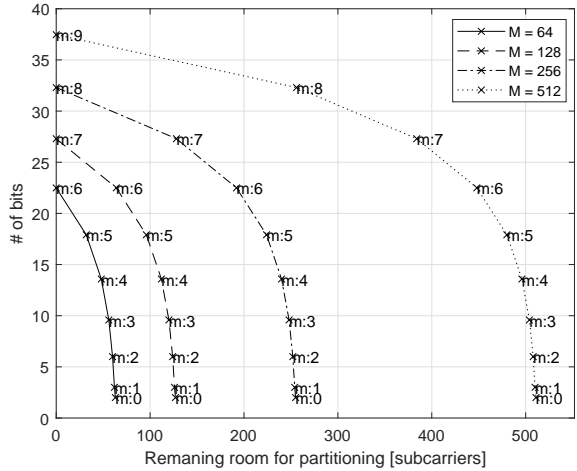}
		\label{subfig:bitsOnNZ}}		
	\subfloat[Maximum achievable SE for a given number of subcarriers.]{\includegraphics[width =3.5in]{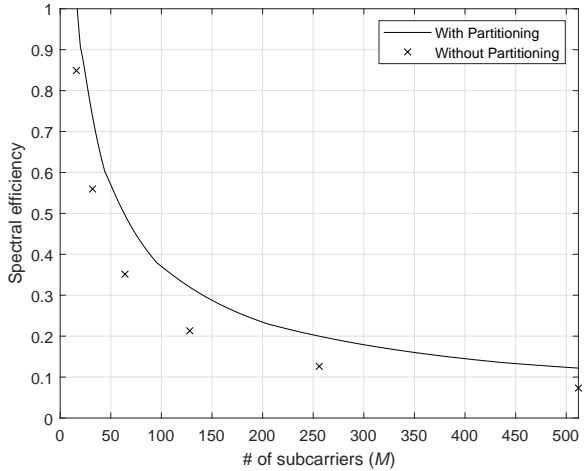}
		\label{subfig:rate}}
	\caption{Partitioning \acp{CS} can increase the number of information bits that can be encoded while increasing the \ac{SE} as compared to the case without partitioning ($\numberOfPointsForPSK=4$).}
	\label{fig:cardinality}
\end{figure*}
Let $\numberOfBitsOnTotal$, $\numberOfBitsOnPartitions$, $\numberOfBitsOnNZ$, and $\spectralEfficient$ denote the number of maximum information bits represented by a \ac{CS}, the number of maximum information bits  conveyed by the support of a \ac{CS}, the number of maximum information bits  carried by the non-zero elements of a \ac{CS}, and the \ac{SE}, respectively. Based on Theorem~\ref{co:numberOfSequences}, $\numberOfBitsOnTotal$, $\numberOfBitsOnPartitions$, $\numberOfBitsOnNZ$, and $\spectralEfficient$ can be calculated as $\numberOfBitsOnTotal=\log_{2}{{\numberOfEnumarationSequences[\numberOfIterations]}}$, $\numberOfBitsOnPartitions=\log_{2}{\cardinalityP[\numberOfIterations][\totalLength-2^{\numberOfIterations}]}$, $\numberOfBitsOnNZ=\log_{2}{\unitSequences}$, and $\spectralEfficient=\floor{\log_{2}{{\numberOfEnumarationSequences[\numberOfIterations]}}}/\totalLength$, respectively. 

In \figurename~\ref{fig:cardinality}, we analyze how $\numberOfBitsOnTotal$, $\numberOfBitsOnPartitions$, and $\numberOfBitsOnNZ$ change for given $\totalLength\in\{64, 128, 256, 512\}$  and $\numberOfIterations\le9$ and provide the maximum achievable \ac{SE}  for a given number of subcarriers $\totalLength\in\{16, 17, \mydots, 512\}$ for $\numberOfPointsForPSK=4$. As shown in \figurename~\ref{fig:cardinality}\subref{subfig:bitsOnPartAndNZ}, the surprising result is that $\numberOfBitsOnTotal$  is not maximum when all available subcarriers are utilized based on the Davis and Jedwab's encoder in \eqref{eq:imagPartReduced}. For example, for $\totalLength=512$, $\numberOfBitsOnTotal$ reaches to $62.43$ bits for $\numberOfIterations=8$ and $\remaningLength=256$ while it is $37.47$ bits for $\numberOfIterations=9$ and $\remaningLength=0$. In other words, the number of different  \acp{CS} by using only \eqref{eq:imagPartReduced} is improved approximately by a factor of $2^{25}$  without changing the alphabet of the non-zero elements of the \ac{CS} (i.e., $\numberOfPointsForPSK$-\ac{PSK}) when the partitioning is taken into account. As shown in \figurename~\ref{fig:cardinality}\subref{subfig:bitsOnPart},  $\numberOfBitsOnPartitions$ first increases by decreasing $\numberOfIterations$ as the remaining room for the partitioning increases. 
However, it sharply decreases after a certain value of $\numberOfIterations$ since the number of clusters (i.e., $2^\numberOfIterations$) decreases for a smaller $\numberOfIterations$. \figurename~\ref{fig:cardinality}\subref{subfig:bitsOnNZ} illustrates how fast $\numberOfBitsOnNZ$ decreases with a smaller $\numberOfIterations$. Since $\numberOfBitsOnPartitions$ increases faster than the degradation of $\numberOfBitsOnNZ$ when $\numberOfIterations$ decreases (up to a certain value of $\numberOfIterations$), $\numberOfBitsOnTotal$ improves. This implies that a large number of information bits can be encoded with the partitioned \acp{CS}. In \figurename~\ref{fig:cardinality}\subref{subfig:rate}, we provide the maximum achievable \ac{SE} with the partitioned \acp{CS} for a given number of subcarriers with a computer search. We observe that the maximum \ac{SE} increases noticeably with the partitioning for a large $\lengthData$ as compared to the cases without partitioning. The partitioned \acp{CS} also exploit the available subcarriers better than the cases without partitioning as their lengths can be non-power-of-two. 

\subsection{Minimum Distance Analysis}
Let $\minimumDistance$ be the minimum Euclidean distance for the set of the partitioned \acp{CS} for given $\numberOfIterations$ and $\lengthData$, where the non-zero elements follow Davis and Jedwab's construction. We can obtain $\minimumDistance$ as
\begin{align}
	\minimumDistance=\min\{\minimumDistanceNZ,\minimumDistancePart\}~,
\end{align}
where $\minimumDistanceNZ$ and  $\minimumDistancePart$  are the minimum Euclidean distances between two different \acp{CS} such that they have identical and different supports, respectively.

Assume that $\numberOfPointsForPSK\triangleq2^{\numberOfPointsForPSKexp}$ for $\numberOfPointsForPSKexp\in\integersPositive$, and $\numberOfIterations>0$. In \cite{davis_1999}, it was shown that the minimum Lee distance of the code based on \eqref{eq:imagPartReduced} for $\numberOfPointsForPSKexp=1$ and $\numberOfPointsForPSKexp>1$ are $2^{\numberOfIterations-2}$  and $2^{\numberOfIterations-1}$, respectively. Hence, for a given support, $\minimumDistanceNZ$ can be calculated as
\begin{align}
	\minimumDistanceNZ = \begin{cases}
		\sqrt{\lengthData}, & \numberOfPointsForPSKexp=1\\
		\sqrt{2\lengthData}\sin\left(\frac{\pi}{2^\numberOfPointsForPSKexp}\right), & \numberOfPointsForPSKexp\ge1
	\end{cases}~,
	\label{eq:dminphase}
\end{align}
since the magnitude of each non-zero element is scaled to $\magnitudePSK\triangleq\sqrt{\lengthData/2^{\numberOfIterations}}$  as we utilize $\lengthData\ge2^\numberOfIterations$ subcarriers, where $2^\numberOfIterations$ of them are non-zero  and the minimum distance  for $\numberOfPointsForPSK$-PSK alphabet is $2\magnitudePSK\sin\left(\frac{\pi}{\numberOfPointsForPSK}\right)$. 

\begin{lemma}
Without any restriction on  $(\separationFreqFix,\separationFreq[1],\mydots,\separationFreq[\numberOfIterations])$,  $\minimumDistancePart$ is bounded as 
\begin{align}
	\minimumDistancePart\ge\sqrt{\frac{\lengthData}{2^{\numberOfIterations-1}}}~.
	\label{eq:dminpart}
\end{align}
\end{lemma}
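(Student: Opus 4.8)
The plan is to reduce the claimed bound to an elementary counting argument on supports. Take two partitioned \acp{CS} $\seqGt$ and $\seqGt'$, both produced by Theorem~\ref{th:reduced} from admissible parameter tuples, and suppose their supports $P\triangleq\operatorSupport[{\seqGt}]$ and $P'\triangleq\operatorSupport[{\seqGt'}]$ are distinct, $P\neq P'$. Since every $\separationFreq[\indexIteration]\ge0$, the non-overlapping condition \eqref{eq:conditionNoOverlapping} is satisfied for both tuples via the bijection \eqref{eq:separationClosed}; hence Theorem~\ref{th:reduced} applies with no superposed terms, so each of $\seqGt,\seqGt'$ has exactly $2^{\numberOfIterations}$ non-zero entries, each of magnitude $\magnitudePSK=\sqrt{\lengthData/2^{\numberOfIterations}}$.

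First I would bound the size of the support mismatch. From $|P|=|P'|=2^{\numberOfIterations}$ it follows that $|P\setminus P'|=|P'\setminus P|$, and since $P\neq P'$ this common value is at least $1$, so $|P\setminus P'|+|P'\setminus P|\ge2$. Next, at every index lying in $(P\setminus P')\cup(P'\setminus P)$ exactly one of $\seqGt,\seqGt'$ is zero while the other has magnitude $\magnitudePSK$, so that index by itself contributes $\magnitudePSK^{2}$ to the squared Euclidean distance $\|\seqGt-\seqGt'\|^{2}$. Dropping the remaining, nonnegative contributions yields $\|\seqGt-\seqGt'\|^{2}\ge\bigl(|P\setminus P'|+|P'\setminus P|\bigr)\,\magnitudePSK^{2}\ge 2\magnitudePSK^{2}=2\lengthData/2^{\numberOfIterations}=\lengthData/2^{\numberOfIterations-1}$. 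Taking square roots and then the minimum over all admissible pairs with distinct supports gives $\minimumDistancePart\ge\sqrt{\lengthData/2^{\numberOfIterations-1}}$.

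There is no real obstacle in the argument; the two points that need care are (i) certifying that both sequences genuinely carry $2^{\numberOfIterations}$ non-zero elements of common magnitude $\magnitudePSK$, which is precisely where the non-overlapping condition and the magnitude normalization $\magnitudePSK$ enter, and (ii) noting that equal support cardinalities force the two ``halves'' of the symmetric difference to be equinumerous, so that a single index of support disagreement already costs $2\magnitudePSK^{2}$ rather than $\magnitudePSK^{2}$. I would also flag that this bound is generally loose---it exploits none of the phase structure of the non-zero entries and ignores that shifting one cluster usually displaces many non-zero entries at once---which is exactly what motivates the tighter, minimum-distance-constrained partitioning rule developed next.
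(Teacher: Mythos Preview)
Your proof is correct and follows essentially the same support-counting argument as the paper: both observe that two partitioned \acp{CS} with distinct supports of common cardinality $2^{\numberOfIterations}$ must disagree in at least two positions where one entry is zero and the other has magnitude $\magnitudePSK$, yielding $\minimumDistancePart\ge\magnitudePSK\sqrt{2}$. Your formulation via the symmetric difference is in fact slightly cleaner than the paper's, which states $|\operatorSupport[{\seqGt_1}]\cap\operatorSupport[{\seqGt_2}]|\le 2^{\numberOfIterations}-2$---a claim that is actually too strong (the subsequent Example~1 itself has intersection size $2^{\numberOfIterations}-1$), though the paper's final bound is unaffected.
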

\begin{proof}
	Let $\seqGt_1$ and $\seqGt_2$ be two \ac{CS} with different supports, where $|\operatorSupport[{\seqGt_1}]|=|\operatorSupport[{\seqGt_2}]|=2^\numberOfIterations$. As $\seqGt_1$ and $\seqGt_2$ have different supports, $|\operatorSupport[{\seqGt_1}]\cap\operatorSupport[{\seqGt_2}]|\le2^\numberOfIterations-2$ must hold. The Euclidean distance between $\seqGt_1$ and $\seqGt_2$ is minimum if the elements on the support are identical, leading to $\minimumDistancePart\ge\magnitudePSK\sqrt{2}$. 
\end{proof}

It can also be shown that  \eqref{eq:dminpart} is tight for same values of $\numberOfIterations$:
\begin{example}
\rm For $\numberOfIterations=3$, consider the cases where $\{(\separationFreqFix,\separationFreq[1],\separationFreq[2],\separationFreq[3])=(0,0,0,0), \seqPermutationCompShift=(3,2,1)\}$ and  $\{(\separationFreqFix,\separationFreq[1],\separationFreq[2],\separationFreq[3])=(0,1,0,0),\seqPermutationCompShift=(2,3,1)\}$ for $\angleexpAll[\indexIteration]=\arbitraryPhaseK=0$ for $\indexIteration=1,2,3$, $\numberOfPointsForPSK=4$, and $\lengthData=9$. The corresponding sequences for the former and the latter cases can be obtained as
\begin{align}
	(1,1,1,-1,1,1,-1,1,0)~,\nonumber\\
	(1,1,1,-1,0,1,-1,1,1)~\nonumber,
\end{align}
respectively. The elements on the fifth and the ninth positions are different while the rest of the elements for any other position are identical. For this example, $\minimumDistanceNZ=3$. Hence, $\minimumDistance=1.5$ since	$\minimumDistancePart=1.5$ and  \eqref{eq:dminpart} is tight.
\end{example}


For a large $\numberOfIterations$, $\minimumDistancePart$ can be much smaller than $\minimumDistanceNZ$. 
To obtain a larger $\minimumDistancePart$, 
we restrict the support of an admissible partitioned \ac{CS} 
and exploit the fact that both halves of a partitioned sequence \ac{CS} have identical partitioning as can be seen in \figurename~\ref{fig:partitioning}. If the minimum distance  for one of the halves is $\magnitudePSK\sqrt{2^{\spacing}}$  for $\spacing\in\integersNonnegative$, the minimum distance for the complete sequence increases to $\magnitudePSK\sqrt{2^{\spacing+1}}$. The problem of obtaining the partitions with the minimum distance of $\magnitudePSK\sqrt{2^{\spacing}}$ for one of the halves is equal to the original problem in \figurename~\ref{fig:partitioning}, where the sequence length is $2^{\numberOfIterations-1}$ and the room for partitioning is $\floor{\remaningLength/2}$. Thus, the number of different supports where $\minimumDistancePart\ge\magnitudePSK\sqrt{2^{\spacing+1}}$ can be calculated recursively as 
\begin{align}
	\cardinalityDistance[\numberOfIterations,\spacing][\remaningLength]= \cardinalityDistance[\numberOfIterations-1,\spacing-1][{\floor{ \remaningLength/2  }}]~,
	\label{eq:cardinalityUnderDistanceConstraint}
\end{align} 
where $\cardinalityDistance[\numberOfIterations,0][\remaningLength]=\cardinalityP[\numberOfIterations][\remaningLength]$.  Equation \eqref{eq:cardinalityUnderDistanceConstraint} leads to the following conclusion:
\begin{corollary}
	Let $\numberOfIterations,\numberOfPointsForPSKexp\in\integersPositive$ and $\totalLength\ge2^\numberOfIterations$. For $\numberOfPointsForPSK=2^\numberOfPointsForPSKexp$ and $0\le\spacing\le\numberOfIterations-1$, ${\numberOfEnumarationSequences[\numberOfIterations,\spacing]}=\cardinalityDistance[\numberOfIterations,\spacing][\totalLength-2^{\numberOfIterations}]\unitSequences$, where $\minimumDistance=\min\{\minimumDistanceNZ,\minimumDistancePart\}$ and $\minimumDistancePart\ge\sqrt{\lengthData/2^{\numberOfIterations-\spacing-1}}$.	
		\label{co:numberOfSequencesDistance}
\end{corollary}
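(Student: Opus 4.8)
The plan is to obtain the counting identity directly from Theorem~\ref{co:numberOfSequences} together with the recursion \eqref{eq:cardinalityUnderDistanceConstraint}, and to establish the distance bound by induction on $\spacing$, with the preceding lemma furnishing the base case $\spacing=0$.

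For the count, I would reuse the decomposition behind Theorem~\ref{co:numberOfSequences}: an admissible partitioned \ac{CS} is specified by an independent pair consisting of (i) the phase data $\{\seqPermutationCompShift,\angleexpAll[1],\mydots,\angleexpAll[\numberOfIterations],\arbitraryPhaseK\}$, which is untouched by any support constraint and contributes the factor $\unitSequences$, and (ii) a separation vector $(\separationFreqFix,\separationFreq[1],\mydots,\separationFreq[\numberOfIterations])$ satisfying \eqref{eq:conditionSeparation} with $\remaningLength=\totalLength-2^{\numberOfIterations}$. Imposing the minimum-distance restriction only constrains the separation vector, and by the definition of $\cardinalityDistance[\numberOfIterations,\spacing][\cdot]$ the number of admissible vectors is $\cardinalityDistance[\numberOfIterations,\spacing][\totalLength-2^{\numberOfIterations}]$; unwinding \eqref{eq:cardinalityUnderDistanceConstraint} and using $\floor{\floor{x/2}/2}=\floor{x/4}$ shows that $\cardinalityDistance[\numberOfIterations,\spacing][\remaningLength]=\cardinalityP[\numberOfIterations-\spacing][\floor{\remaningLength/2^{\spacing}}]$ is finite and well defined for $0\le\spacing\le\numberOfIterations-1$. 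The product of the two counts gives ${\numberOfEnumarationSequences[\numberOfIterations,\spacing]}=\cardinalityDistance[\numberOfIterations,\spacing][\totalLength-2^{\numberOfIterations}]\unitSequences$. Because $\minimumDistanceNZ$ in \eqref{eq:dminphase} only involves pairs sharing a support, it is unchanged by the restriction, so the code minimum distance remains $\minimumDistance=\min\{\minimumDistanceNZ,\minimumDistancePart\}$ and it only remains to bound $\minimumDistancePart$ over the restricted family.

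For the distance bound, I would first spell out the symmetric-halves structure of $\operatorSupport[\seqGt]$ implied by \eqref{eq:shift}: writing the $\varMonomial$th nonzero position as $\funcfForCommonShiftDec(\varMonomial)+\varMonomial$ and separating $\varMonomial$ according to its most significant bit $\monomial[1]$ shows $\operatorSupport[\seqGt]=\separationGolayFix+\bigl(B\cup(B+2^{\numberOfIterations-1}+\separationGolay[1])\bigr)$, where $B$ is precisely the support of a partitioned \ac{CS} of the reduced size $2^{\numberOfIterations-1}$, separation vector $(0,\separationFreq[2],\mydots,\separationFreq[\numberOfIterations])$, and room $\floor{\remaningLength/2}$; in particular both halves carry the identical intra-half pattern $B$. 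Iterating this fold $\spacing$ times, a support that survives the level-$\spacing$ restriction is $2^{\spacing}$ contiguous copies of one reduced block of $2^{\numberOfIterations-\spacing}$ nonzero positions, preceded by a multiple of $2^{\spacing}$ leading zeros, and the bijection behind \eqref{eq:cardinalityUnderDistanceConstraint} identifies such supports with the level-$(\spacing-1)$-admissible supports of the half problem. I would then prove by induction on $\spacing$ that two distinct admissible supports $S_1,S_2$ have symmetric difference of size at least $2^{\spacing+1}$: the base case $\spacing=0$ is the lemma; for the step, if the reduced patterns $B_1,B_2$ differ then the induction hypothesis for the half problem gives $|B_1\triangle B_2|\ge2^{\spacing}$ and, since the mismatch is duplicated in the upper half, $|S_1\triangle S_2|\ge 2^{\spacing+1}$, whereas if $B_1=B_2$ but the outer separations differ then the worst case is the gap-free block, where the support is an interval of $2^{\numberOfIterations}$ positions and a shift by a nonzero multiple of $2^{\spacing}$ yields symmetric difference exactly $2^{\spacing+1}$, any internal gap only enlarging it. Since \cite{sahin_2020gm} shows the Euclidean distance between two \acp{CS} of different supports is smallest when the shared elements coincide, this yields $\minimumDistancePart\ge\magnitudePSK\sqrt{2^{\spacing+1}}=\sqrt{\lengthData/2^{\numberOfIterations-\spacing-1}}$ upon substituting $\magnitudePSK=\sqrt{\lengthData/2^{\numberOfIterations}}$.

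The main obstacle is the second case of the induction. When the intra-half pattern is fixed but the ``outer'' parameters $\separationFreqFix,\separationFreq[1],\mydots,\separationFreq[\spacing]$ change, one must rule out a near-collision in which the two supports differ only by a shift of a few positions and hence have symmetric difference as small as $2$; the role of the restriction encoded in \eqref{eq:cardinalityUnderDistanceConstraint} (the leading zero-count being a multiple of $2^{\spacing}$ and the top $\spacing$ separations being frozen) is exactly to force every such shift to be at least $2^{\spacing}$ positions. Turning this into a clean proof is bookkeeping rather than a new idea: it amounts to translating the recursion \eqref{eq:cardinalityUnderDistanceConstraint} into a precise description of how $\operatorSupport[\seqGt]$ depends on the separation vector and then checking the worst-case overlap of $2^{\spacing}$-fold copies of a block against a shifted copy of itself.
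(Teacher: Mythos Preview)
Your overall plan matches the paper's: the counting formula is read off from Theorem~\ref{co:numberOfSequences} together with the recursion \eqref{eq:cardinalityUnderDistanceConstraint}, and the distance bound follows by induction on $\spacing$ via the symmetric-halves structure, with the preceding lemma as the base case. The paper does not write out a separate proof; the corollary is stated as an immediate consequence of the short paragraph preceding it, which is precisely this inductive picture.

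Where your inductive step goes astray is in the description of the level-$\spacing$ restriction and, consequently, in the case split. The restriction does not freeze the top $\spacing$ separations nor force the leading-zero count to be a multiple of $2^{\spacing}$. What the encoder $\algorithmEncoderDeltaSeparation[\cdot]$ in Algorithm~\ref{alg:enum} does for $\spacing>0$ is take a level-$(\numberOfIterations-1,\spacing-1)$ solution $(\tilde s',\tilde s_1,\dots,\tilde s_{\numberOfIterations-1})$, set $(\separationFreq[2],\dots,\separationFreq[\numberOfIterations])=(\tilde s_1,\dots,\tilde s_{\numberOfIterations-1})$, $\separationFreq[1]=2\tilde s'$, and then \emph{compute} $\separationFreqFix$ by a centering formula. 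Under this bijection the full support restricted to positions $\lfloor\lengthData/2\rfloor,\dots,\lengthData-1$ is exactly the reduced level-$(\numberOfIterations-1,\spacing-1)$ support (leading zeros $\tilde s'$, then the same internal pattern), and the first half is its reflection since the gap pattern is palindromic. Hence two distinct level-$\spacing$ supports restrict to two distinct level-$(\spacing-1)$ supports in \emph{each} half; the induction hypothesis applied to each half already gives symmetric difference at least $2^{\spacing}$, so at least $2^{\spacing+1}$ in total, and no case split is needed.

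With your decomposition, case~(a) is not valid as written: the induction hypothesis concerns admissible half-supports, not the bare patterns $B_i$ (for $\spacing\ge2$ a pattern with zero leading shift need not be level-$(\spacing-1)$ admissible), and even granting $|B_1\triangle B_2|\ge 2^{\spacing}$ one cannot conclude $|S_1\triangle S_2|\ge 2^{\spacing+1}$, because the copies of $B_i$ inside $S_i$ sit at offsets $(\separationFreqFix,\separationFreq[1])$ that generally differ between the two sequences, so the mismatch is not simply ``duplicated.'' The obstacle you flag in case~(b) is then an artifact of this setup and disappears once you compare half-supports rather than patterns.
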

An illustrative example for $\spacing=2$ is provided in \figurename~\ref{fig:partitioningDistance}. Since the support of the final \ac{CS} is restricted to be extended symmetrically in a recursive manner, the minimum distance increases with $\spacing$.

\begin{figure}[t]
	\centering
	{\includegraphics[width =3.3in]{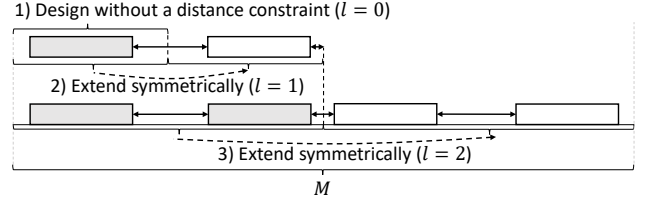}
	}
	\caption{An example of partitioning with a minimum distance constraint ($\spacing=2$). If the minimum distance  for one of the halves is $\magnitudePSK\sqrt{2^{\spacing}}$, the minimum distance for the complete sequence increases to $\magnitudePSK\sqrt{2^{\spacing+1}}$ due to the symmetric support.}
	\label{fig:partitioningDistance}
\end{figure}

\subsection{Bijective Mappings}
To develop an encoder and a decoder based on the partitioned \acp{CS}, the information bits need to be mapped to the sequence $(\separationFreqFix,\separationFreq[1],\mydots,\separationFreq[\numberOfIterations])$ or vice versa.  Since these mappings are not trivial, we develop the algorithms that map a natural number $\integerToBeMapped$ to an admissible $(\separationFreqFix,\separationFreq[1],\mydots,\separationFreq[\numberOfIterations])$  or vice versa  for given $\remaningLength$ and $\spacing$, where $\integerToBeMapped-1$ represents the binary number constructed with a set of information bits in decimal. We need the following definitions:
\begin{definition}\rm
	$\algorithmEncoder[\integerToBeMapped,\remaningLengthAnother,\numberOfIterations]$ is the function that returns  the $\integerToBeMapped$th sequence  $(\separationFreq[1],\mydots,\separationFreq[\numberOfIterations])$ for a given $\remaningLengthAnother$  such that $\separationFreq[1]2^{0}+\separationFreq[2]2^{1}+\cdots+\separationFreq[\numberOfIterations]2^{\numberOfIterations-1}\le\remaningLengthAnother$ and  $\integerToBeMapped\in\{1,2,\mydots,\cardinality[\numberOfIterations][\remaningLengthAnother]\}$.
\end{definition}
\begin{definition}\rm
	 $\algorithmEncoderP[\integerToBeMapped, \remaningLength,\numberOfIterations]$ is the function that returns the $\integerToBeMapped$th sequence  $(\separationFreqFix,\separationFreq[1],\mydots,\separationFreq[\numberOfIterations])$ for a given $\remaningLength$ such that $\separationFreqFix+\separationFreq[1]2^{0}+\separationFreq[2]2^{1}+\cdots+\separationFreq[\numberOfIterations]2^{\numberOfIterations-1}\le\remaningLength$ and   $\integerToBeMapped\in\{1,2,\mydots,\cardinalityP[\numberOfIterations][\remaningLength]\}$.
\end{definition}
\begin{definition}\rm
	$\algorithmEncoderDeltaSeparation[\integerToBeMapped,\remaningLength,\numberOfIterations,\spacing]$ is the function that returns the $\integerToBeMapped$th sequence  $(\separationFreqFix,\separationFreq[1],\mydots,\separationFreq[\numberOfIterations])$ for a given $\remaningLength$ such that $\separationFreqFix+\separationFreq[1]2^{0}+\separationFreq[2]2^{1}+\cdots+\separationFreq[\numberOfIterations]2^{\numberOfIterations-1}\le\remaningLength$ and $\minimumDistancePart\ge\sqrt{\lengthData/2^{\numberOfIterations-\spacing-1}}$ and   $\integerToBeMapped\in\{1,2,\mydots,\cardinalityDistance[\numberOfIterations,\spacing][\remaningLength]\}$.
\end{definition}
\begin{definition}\rm
	The inverse functions of $\algorithmEncoder[\integerToBeMapped,\remaningLengthAnother,\numberOfIterations]$, $\algorithmEncoderP[\integerToBeMapped, \remaningLength,\numberOfIterations]$, and $\algorithmEncoderDeltaSeparation[\integerToBeMapped,\remaningLength,\numberOfIterations,\spacing]$ are $\algorithmDecoder[{(\separationFreq[1],\mydots,\separationFreq[\numberOfIterations])},\remaningLengthAnother,\numberOfIterations]$, $\algorithmDecoderP[{(\separationFreqFix,\separationFreq[1],\mydots,\separationFreq[\numberOfIterations])},\remaningLength,\numberOfIterations]$, and $\algorithmDecoderDeltaSeparation[{(\separationFreqFix,\separationFreq[1],\mydots,\separationFreq[\numberOfIterations])},\remaningLength,\numberOfIterations,\spacing]$, respectively.
\end{definition}

\def\newSumn{n_{\rm test}}
\def\newSum{n_{\rm test}}
\label{subsubsec:bijectionToSeparationUnderDistance}
\begin{algorithm}[t]
	\scriptsize
	\caption{\small Mapping between natural numbers and separations}\label{alg:enum}
	\SetKwInput{KwInput}{Input}                
	\SetKwInput{KwOutput}{Output}              
	\DontPrintSemicolon
	
	\SetKwFunction{FencThree}{$\algorithmEncoderDeltaSeparation[\integerToBeMapped,\remaningLength,\numberOfIterations,\spacing]$}
	\SetKwFunction{FencTwo}{$\algorithmEncoderP[\integerToBeMapped, \remaningLength,\numberOfIterations]$}
	\SetKwFunction{FencOne}{$\algorithmEncoder[\integerToBeMapped,\remaningLengthAnother,\numberOfIterations]$}	
	\SetKwFunction{FdecThree}{$\algorithmDecoderDeltaSeparation[{(\separationFreqFix,\separationFreq[1],\mydots,\separationFreq[\numberOfIterations])},\remaningLength,\numberOfIterations,\spacing]$}
	\SetKwFunction{FdecTwo}{$\algorithmDecoderP[{(\separationFreqFix,\separationFreq[1],\mydots,\separationFreq[\numberOfIterations])},\remaningLength,\numberOfIterations]$}
	\SetKwFunction{FdecOne}{$\algorithmDecoder[{(\separationFreq[1],\mydots,\separationFreq[\numberOfIterations])},\remaningLengthAnother,\numberOfIterations]$}

	\SetKwProg{Fn}{Function}{}{}
	\Fn{$(\separationFreqFix,\separationFreq[1],\mydots,\separationFreq[\numberOfIterations])=\FencThree$}{
		
		\eIf{$\spacing=0$}
		{
			$(\separationFreqFix,\separationFreq[1],\mydots,\separationFreq[\numberOfIterations])=\FencTwo$
		}{
		$(\separationFreq[1],\mydots,\separationFreq[\numberOfIterations])\leftarrow\algorithmEncoderDeltaSeparation[\integerToBeMapped,{\floor{\remaningLength/2}},\numberOfIterations-1,\spacing-1]$\;
		$\separationFreq[1]\leftarrow2\separationFreq[1]$\;
		$\separationFreqFix\leftarrow\floor{\remaningLength/2}-(\sum_{i=0}^{\numberOfIterations-1}\separationFreq[i+1]2^i)/2$
		}
	}\;

	\SetKwProg{Fn}{Function}{}{}
	\Fn{$(\separationFreqFix,\separationFreq[1],\mydots,\separationFreq[\numberOfIterations])=\FencTwo$}{
		Calculate $\NsumAll=\sum_{\indexForCardinality=0}^{\chosenReferencePointOther}\cardinality[\numberOfIterations][\remaningLength-\indexForCardinality]\le\integerToBeMapped$ for the largest $\chosenReferencePointOther$\;
		$\separationFreqFix\leftarrow\chosenReferencePointOther$\;
		$(\separationFreq[1],\mydots,\separationFreq[\numberOfIterations])\leftarrow\algorithmEncoder[\integerToBeMapped-\NsumAll,\remaningLength-\chosenReferencePointOther,\numberOfIterations]$
	}\;

	\SetKwProg{Fn}{Function}{}{}
	\Fn{$(\separationFreq[1],\mydots,\separationFreq[\numberOfIterations])=\FencOne$}{
		\eIf{$\numberOfIterations=1$}
		{
			$\separationFreq[1]\leftarrow\integerToBeMapped-1$\;
		}{
			Calculate $\NsumAll=\sum_{\indexForCardinality=0}^{\chosenReferencePointOther}\cardinality[\numberOfIterations][\floor{(\remaningLengthAnother-\indexForCardinality)/2}]$ for the largest $\chosenReferencePointOther$\;
			$\separationFreq[1]\leftarrow\chosenReferencePointOther$\;
			$(\separationFreq[2],\mydots,\separationFreq[\numberOfIterations])=\algorithmEncoder[\integerToBeMapped-\NsumAll,\floor{(\remaningLengthAnother-\chosenReferencePointOther)/2},\numberOfIterations-1]$
		}
	}\;

	\SetKwProg{Fn}{Function}{}{}
	\Fn{$\integerToBeMapped=\FdecThree$}{
		\eIf{$\spacing=0$}
		{
			$\integerToBeMapped=\FdecTwo$
		}{
			$\separationFreq[1]\leftarrow\separationFreq[1]/2$\;
			$\integerToBeMapped\leftarrow\algorithmDecoderDeltaSeparation[{(\separationFreq[1],\mydots,\separationFreq[\numberOfIterations])},{{\floor{\remaningLength/2}}},\numberOfIterations-1,\spacing-1]$
		}
	}\;
	
	\SetKwProg{Fn}{Function}{}{}
	\Fn{$\integerToBeMapped=\FdecTwo$}{
		Calculate $\NsumAllGiven[\separationFreqFix]=\sum_{\indexForCardinality=0}^{\separationFreqFix-1}\cardinality[\numberOfIterations][\remaningLength-\indexForCardinality]$\;
		$\integerToBeMapped=\NsumAllGiven[\separationFreqFix]+\algorithmDecoder[{(\separationFreq[1],\mydots,\separationFreq[\numberOfIterations])},\remaningLength-\separationFreqFix,\numberOfIterations]$
	}\;
	
	\SetKwProg{Fn}{Function}{}{}
	\Fn{$\integerToBeMapped=\FdecOne$}{
		$\integerToBeMapped\leftarrow1$\;
		\eIf{$\numberOfIterations=1$}
		{
			$\integerToBeMapped\leftarrow\separationFreq[1]+1$\;
		}{
			Calculate $\NsumAllGiven[{\separationFreq[1]}]=\sum_{\indexForCardinality=0}^{\separationFreq[1]-1}\cardinality[\numberOfIterations-1][\floor{(\remaningLengthAnother-\indexForCardinality)/2}]$\;
			$\integerToBeMapped\leftarrow\NsumAllGiven[{\separationFreq[1]}]+\algorithmDecoder[{(\separationFreq[2],\mydots,\separationFreq[\numberOfIterations])},\floor{(\remaningLengthAnother-\separationFreq[1])/2},\numberOfIterations-1]$
		}
	}\;
\end{algorithm}

\subsubsection{Mapping from natural numbers to separations}
\label{subsubsec:bijectionToSeparation}

For $\spacing>0$, the function $\algorithmEncoderDeltaSeparation[\integerToBeMapped,\remaningLength,\numberOfIterations,\spacing]$ first  obtains the sequence $(\separationFreq[1],\mydots,\separationFreq[\numberOfIterations])$ by calling itself as $\algorithmEncoderDeltaSeparation[{\integerToBeMapped},{\floor{\remaningLength/2}},\numberOfIterations-1,\spacing-1]$. After setting $\separationFreq[1]\leftarrow2\separationFreq[1]$ (due to the symmetric support of the halves), it calculates $\separationFreqFix$ as $\separationFreqFix\leftarrow\floor{\remaningLength/2}-(\sum_{i=0}^{\numberOfIterations-1}\separationFreq[i+1]2^i)/2$. For $\spacing=0$, the corresponding sequence is calculated with $\algorithmEncoderP[\integerToBeMapped,\remaningLength,\numberOfIterations]$ as there is no restriction on the support.

The function $\algorithmEncoderP[\integerToBeMapped, \remaningLength,\numberOfIterations]$ utilizes \eqref{eq:cardinalityP}. 
It calculates  $\separationFreqFix$ as the largest $\chosenReferencePointOther$ such that $\NsumAll=\sum_{\indexForCardinality=0}^{\chosenReferencePointOther}\cardinality[\numberOfIterations][\remaningLength-\indexForCardinality]\le\integerToBeMapped$. Hence, the remaining sequence $(\separationFreq[1],\mydots,\separationFreq[\numberOfIterations])$ can be identified as $(\separationFreq[1],\mydots,\separationFreq[\numberOfIterations])=\algorithmEncoder[{\integerToBeMapped-\NsumAll},\remaningLength-\chosenReferencePointOther,\numberOfIterations]$.

The function $\algorithmEncoder[{\integerToBeMapped},\remaningLengthAnother,\numberOfIterations]$ calculates  $\separationFreq[1]$ as the largest $\chosenReferencePointOther$ such that $\NsumAll=\sum_{\indexForCardinality=0}^{\chosenReferencePointOther}\cardinality[\numberOfIterations][\floor{(\remaningLengthAnother-\indexForCardinality)/2}]\le\integerToBeMapped$ for $\numberOfIterations>1$ based on  \eqref{eq:Arecursive}. It obtains the remaining sequence $(\separationFreq[2],\mydots,\separationFreq[\numberOfIterations])$  from $\algorithmEncoder[{\integerToBeMapped-\NsumAll},\floor{(\remaningLengthAnother-\chosenReferencePointOther)/2},\numberOfIterations-1]$. For $\numberOfIterations=1$, $\separationFreq[1]$ is $\integerToBeMapped-1$.

The pseudocodes for $\algorithmEncoder[\integerToBeMapped,\remaningLengthAnother,\numberOfIterations]$, $\algorithmEncoderP[\integerToBeMapped, \remaningLength,\numberOfIterations]$, and $\algorithmEncoderDeltaSeparation[\integerToBeMapped,\remaningLength,\numberOfIterations,\spacing]$ are provided in Algorithm~\ref{alg:enum}.

\subsubsection{Mapping from separations to natural numbers}
The function $\algorithmDecoderDeltaSeparation[{(\separationFreqFix,\separationFreq[1],\mydots,\separationFreq[\numberOfIterations])},\remaningLength,\numberOfIterations,\spacing]$
 returns  $\integerToBeMapped$ as $\algorithmDecoderDeltaSeparation[{(\separationFreq[1]/2,\mydots,\separationFreq[\numberOfIterations])},{\floor{\remaningLength/2}},\numberOfIterations-1,\spacing-1]$. For $\spacing=0$, $\integerToBeMapped$  is calculated with $\algorithmDecoderP[{(\separationFreqFix,\separationFreq[1],\mydots,\separationFreq[\numberOfIterations])},\remaningLength,\numberOfIterations]$.

\label{subsubsec:bijectionToNumbers}
The function $\algorithmDecoderP[{(\separationFreqFix,\separationFreq[1],\mydots,\separationFreq[\numberOfIterations])},\remaningLength,\numberOfIterations]$ calculates how much $\separationFreqFix$ contributes to $\integerToBeMapped$ as $\NsumAllGiven[\separationFreqFix]=\sum_{\indexForCardinality=0}^{\separationFreqFix-1}\cardinality[\numberOfIterations][\remaningLength-\indexForCardinality]$ by using \eqref{eq:cardinalityP}. It obtains $\integerToBeMapped$ as $\NsumAll+\algorithmDecoder[{(\separationFreq[1],\mydots,\separationFreq[\numberOfIterations])},\remaningLength-\separationFreqFix,\numberOfIterations]$.

The function $\algorithmDecoder[{(\separationFreq[1],\mydots,\separationFreq[\numberOfIterations])},\remaningLengthAnother,\numberOfIterations]$ calculates the contribution of $\separationFreq[1]$  as $\NsumAllGiven[{\separationFreq[1]}]=\sum_{\indexForCardinality=0}^{\separationFreq[1]-1}\cardinality[\numberOfIterations-1][\floor*{{(\remaningLengthAnother-\indexForCardinality)}/{2}}]$ for $\numberOfIterations>1$ by exploiting \eqref{eq:Arecursive}. It obtains $\integerToBeMapped$ as $\integerToBeMapped=\NsumAll+\algorithmDecoder[{(\separationFreq[2],\mydots,\separationFreq[\numberOfIterations])},\floor*{{(\remaningLengthAnother-\separationFreq[1])}/{2}},\numberOfIterations-1]$. For $\numberOfIterations=1$, $\integerToBeMapped$ is $\separationFreq[1]+1$.

The pseudocodes for $\algorithmDecoder[{(\separationFreq[1],\mydots,\separationFreq[\numberOfIterations])},\remaningLengthAnother,\numberOfIterations]$, $\algorithmDecoderP[{(\separationFreqFix,\separationFreq[1],\mydots,\separationFreq[\numberOfIterations])},\remaningLength,\numberOfIterations]$, and $\algorithmDecoderDeltaSeparation[{(\separationFreqFix,\separationFreq[1],\mydots,\separationFreq[\numberOfIterations])},\remaningLength,\numberOfIterations,\spacing]$ are provided in Algorithm~\ref{alg:enum}.

\section{Encoder and Decoder}
\label{sec:encAndDec}
In this section, we use the partitioned \acp{CS} to synthesize low-\ac{PAPR} \ac{OFDM} symbols and discuss the corresponding encoding and decoding operations.

\subsection{Encoder}
We first split the information bits into two bit sequences that are mapped to $(\arbitraryPhaseK,\angleexpAll[1],\mydots,\angleexpAll[\numberOfIterations])$ and  $\{\seqPermutationCompShift, (\separationFreqFix,\separationFreq[1],\mydots,\separationFreq[\numberOfIterations])\}$, i.e.,  $\bitsForPhase$ and $\bitsForSep$, respectively.
We then convert $\bitsForSep$ to a decimal number $\decimalForSepAndPerm$. 
To obtain $\seqPermutationCompShift$ and $(\separationFreqFix,\separationFreq[1],\mydots,\separationFreq[\numberOfIterations])$, we decompose  $\decimalForSepAndPerm = \decimalForSep\numberOfIterations!/2+\decimalForPerm$ such that $\decimalForPerm<\numberOfIterations!/2$ for $\decimalForPerm,\decimalForSep\in\integersNonnegative$. We then utilize factoradic based on Lehmer code \cite{Lehmer1960TeachingCT} to obtain $\seqPermutationCompShift$ from $\decimalForPerm$. To avoid using $\seqPermutationCompShift$ and its reversed version, we assume that $\permutationMono[1]>\permutationMono[\numberOfIterations]$. We obtain $(\separationFreqFix,\separationFreq[1],\mydots,\separationFreq[\numberOfIterations])$ with $\algorithmEncoderDeltaSeparation[\decimalForSep+1,\remaningLength,\numberOfIterations,\spacing]$  and control the minimum Euclidian distance for the partitioned \acp{CS} with $\spacing$. For $(\arbitraryPhaseK,\angleexpAll[1],\mydots,\angleexpAll[\numberOfIterations])$, the bit mapping is done based on a Gray mapping, e.g.,  $00\rightarrow0$, $01\rightarrow1$, $10\rightarrow3$, and $11\rightarrow2$ for $\numberOfPointsForPSK=4$. After obtaining $\seqPermutationCompShift$, $(\arbitraryPhaseK,\angleexpAll[1],\mydots,\angleexpAll[\numberOfIterations])$, and  $(\separationFreqFix,\separationFreq[1],\mydots,\separationFreq[\numberOfIterations])$, the corresponding baseband \ac{OFDM} symbol   is calculated as $\OFDMinTime[\seqGt][\timeVar]$ where $\timeVar$ is a partitioned \ac{CS} that encodes the information bits. The transmitter diagram is provided in \figurename~\ref{fig:txrx}\subref{subfig:tx}. The transmitter maps  the elements of a standard CS to the subcarriers chosen based on $\bitsForSep$.


\begin{figure*}
	\centering
	\subfloat[{Transmitter diagram.}]{\includegraphics[width =5.7in]{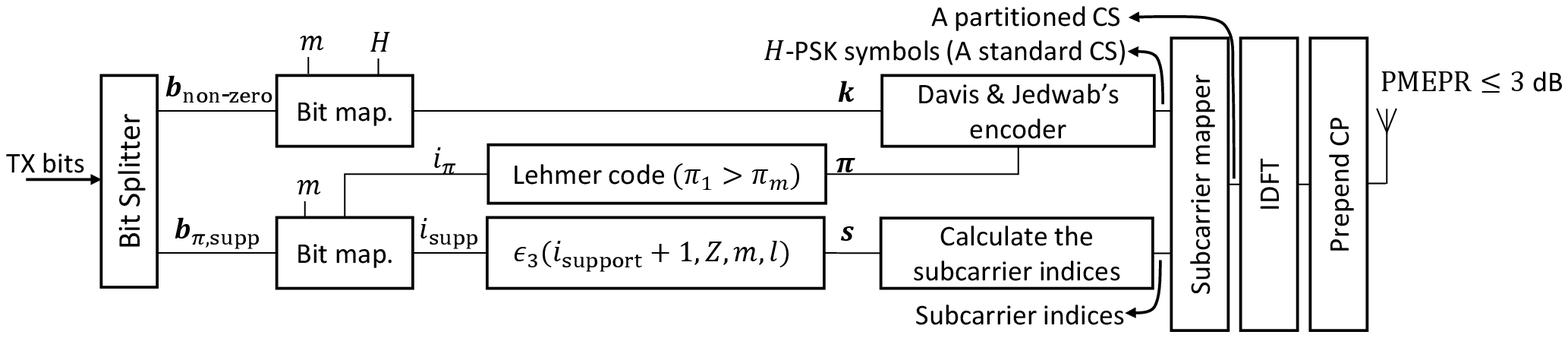}
		\label{subfig:tx}}\\
	\subfloat[{Receiver diagram.}]{\includegraphics[width =5.7in]{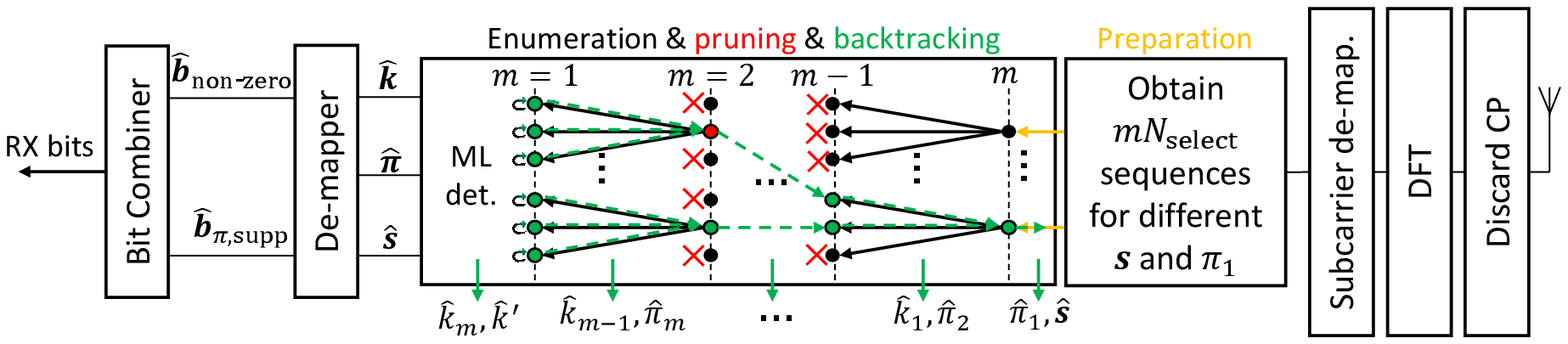}
		\label{subfig:rx}}	
	\caption{{Transmitter and receiver diagrams for partitioned CSs.}}
	\label{fig:txrx}
\end{figure*}

\subsection{Decoder}


We can express the $\varMonomial$th element of the received signal as $\receivedElement[\varMonomial] = \channel[\varMonomial]\eleGt[\varMonomial] +\noise[\varMonomial]$, where $ \channel[\varMonomial]$ and $\noise[\varMonomial]$ are the complex fading channel and the noise coefficients, respectively.  Assuming that the channel coefficients are available at the receiver, a \ac{ML} decoder corresponds to a minimum distance decoder for \ac{AWGN}, i.e.,
\begin{align}
	\{\parametersImagEstimate,\seqStilde \} &= \arg\min_{\{\parametersImag,\seqS \}} \sum_{\varMonomial=0}^{2^\numberOfIterations-1}   |\channel[{\functionIndex[\seqS][\varMonomial]}]\exponentialBase^{ \constantj \funcfForFinalPhaseDec(\varMonomial; \parametersImag)} -  \receivedElement[{\functionIndex[\seqS][\varMonomial]}] |^2	\label{eq:mlIni}\\
	&= \arg\max_{\{\parametersImag,\seqS \}} 	\Re\left\{ \sum_{\varMonomial=0}^{2^\numberOfIterations-1}  \exponentialBase^{-\constantj \funcfForFinalPhaseDec(\varMonomial; \parametersImag)} \weightedElement[{\functionIndex[\seqS][\varMonomial]}]\right\}-\channelSum[{\seqS}]~,
\label{eq:mlNew}	
\end{align}
where $\weightedElement[{\functionIndex[\seqS][\varMonomial]}] = \channel[{\functionIndex[\seqS][\varMonomial]}]^* \receivedElement[{\functionIndex[\seqS][\varMonomial]}]$, $
\channelSum[{\seqS}]=\sum_{\varMonomial=0}^{2^{\numberOfIterations}\shortMinus 1}   \channelPowerElement[{\functionIndex[\seqS][\varMonomial]}]
$ is the channel quality information, 
$\channelPowerElement[{\functionIndex[\seqS][\varMonomial]}]={|\channel[{\functionIndex[\seqS][\varMonomial]}]|^2}/{2}$, $\parametersImag = \{\seqPermutationCompShift,\arbitraryPhaseK,\angleexpAll[1],\mydots,\angleexpAll[\numberOfIterations] \}$, $\seqS=(\separationFreqFix,\separationFreq[1],\mydots,\separationFreq[\numberOfIterations])$, and $\functionIndex[\seqS][\varMonomial]=\varMonomial+\funcfForFinalPhaseDec(\varMonomial)$.

To solve \eqref{eq:mlNew}, we consider the principle discussed in \cite{sahin_2020gm} and modify it by introducing simplications due to the \ac{PSK} alphabet for the non-zero elements of the partitioned \acp{CS}. As done in \cite{sahin_2020gm}, we first decompose $\funcfForFinalPhase(\seqx;\parametersImag)$ as  $\funcfForFinalPhase(\seqx;\parametersImag)=\funcgForFinalPhaseDerivate[\referenceDerivative][\seqx^\referenceDerivative;\parametersImagg]+\monomial[\referenceDerivative]\funcfForFinalPhaseDerivate[\referenceDerivative][\seqx^\referenceDerivative;\parametersImagf]$ for $\referenceDerivative=\{1,2,\mydots,\numberOfIterations\}$, where $\seqx^\referenceDerivative\triangleq(\monomial[1],\monomial[2], \mydots, \monomial[\referenceDerivative-1],\monomial[\referenceDerivative+1], \mydots, \monomial[\numberOfIterations])$ and $\funcfForFinalPhaseDerivate[\referenceDerivative][\seqx^\referenceDerivative] \triangleq \frac{{\partial \funcfForANF(\seqx)}}{\partial \monomial[{\referenceDerivative}]}$. By using this decomposition, we then re-write the objective function in \eqref{eq:mlNew} as
\begin{align}
	\max_{\{\parametersImagg,\parametersImagf,\seqS \}}
	\Re\left\{ \sum_{\varMonomial=0}^{2^{\numberOfIterations-1}\shortMinus 1}  \exponentialBase^{-\constantj \funcgForFinalPhaseDerivateDec[\referenceDerivative][\seqx;\parametersImagg] } \weightedElementAfterSum[{\varMonomial},\referenceDerivative]\right\}-\channelSum[{\seqS}]~,
	\label{eq:mlNeww}
\end{align}
where 
\begin{align}
	\weightedElementAfterSum[\varMonomial]= \weightedElement[{\functionIndex[\seqS][{\mappingFunction[\varMonomial]}]} ] + \weightedElement[{\functionIndex[\seqS][{\mappingFunction[\varMonomial]}+2^{\numberOfIterations-\referenceDerivative}]}  ]\exponentialBase^{-\constantj \funcfForFinalPhaseDerivateDec[\referenceDerivative][\varMonomial;\parametersImagf] }~,
	\label{eq:newSeq}
\end{align}
and $\mappingFunction[\varMonomial]$ maps the integers between $0$ and $2^{\numberOfIterations-1}-1$ to the values of $ \sum_{\indexFirstOrderMonomial=1}^{\numberOfIterations}\monomial[\indexFirstOrderMonomial]2^{\numberOfIterations-\indexFirstOrderMonomial}$    in ascending order for $\monomial[\referenceDerivative]=0$. For ${\permutationMono[1]}=\referenceDerivative$, $\funcfForFinalPhaseDerivate[\referenceDerivative][\seqx^\referenceDerivative]$ can be calculated as $\angleexpAll[1]+\frac{\numberOfPointsForPSK}{2}\monomial[{\permutationMono[2]}]$, which leads to $\parametersImagf=\{\angleexpAll[1],\permutationMono[2]\}$ for a given $\referenceDerivative$. For a hypothesized $\seqS$,  $\angleexpAll[1]$, $\permutationMono[2]$, and $\referenceDerivative$, finding $\parametersImagg$ that solves \eqref{eq:mlNeww} is
equivalent to the problem \eqref{eq:mlNew}, but  the length of the sequence $(\weightedElementAfterSum[\varMonomial])_{\varMonomial=0}^{2^{\numberOfIterations-1}-1}$ is $2^{\numberOfIterations-1}$. Hence, by repeating the same procedure, a recursive \ac{ML} decoder can be obtained. The main bottleneck of this approach is the exponential growth of the enumerated parameters. To address this issue, we terminate the unpromising branches early as done in \cite{sahin_2020gm}. As compared to \cite{sahin_2020gm}, we run the algorithm for the promising values for the sequence $\seqS$ in a parallel. We simplify the decoder by removing steps that are related to high-order constellations. We also reduce the complexity at the backtracking stage as we consider unimodular non-zero elements.

\begin{algorithm}[t]
	\scriptsize
	\caption{\small A recursive decoder for the partitioned \acp{CS}}\label{alg:decoder}
	\SetKwInput{KwInput}{Input}                
	\SetKwInput{KwOutput}{Output}              
	\DontPrintSemicolon
	
	\SetKwFunction{FMain}{main}
	\SetKwFunction{Fdecoder}{dec}
	\SetKwFunction{Fesc}{offsets}
	
	\SetKwProg{Fn}{Function}{}{}
	\Fn{$(\bitsForPhase,\bitsForSep)$=\FMain$((\channel[\varMonomial])_{\varMonomial=0}^{\totalLength-1}),(\receivedElement[\varMonomial])_{\varMonomial=0}^{\totalLength-1},\numberOfIterations,\numberOfPointsForPSK,\remaningLength,\spacing$)}{
		Calculate $\weightedElement[\varMonomial] \leftarrow \channel[\varMonomial]^* \receivedElement[\varMonomial]$ and
		$\channelPowerElement[\varMonomial]\leftarrow{|\channel[\varMonomial]|^2}/{2}$\;
		Populate $\setOfweightedSequence$,			$\setOfweightedChannel$, and $\setOfpermutations$ for different $\seqS$ and $\permutationMono[1]$\; 
		Run $(\indexOptimum,\seqPermutationCompShift,\angleexpAll[1],\mydots,\angleexpAll[\numberOfIterations],\arbitraryPhaseK)=$\Fdecoder{$\setOfweightedSequence$, $\setOfweightedChannel$, $\setOfpermutations$}\;
		Obtain $\seqS$ based on $\indexOptimum$\;
		Calculate $\decimalForPerm $ from $\seqPermutationCompShift$ (permutation to integer)\;
		Calculate $\decimalForSep=\algorithmDecoderDeltaSeparation[{\seqS},\remaningLength,\numberOfIterations,\spacing]-1$\;
		Calculate $\decimalForSepAndPerm = \decimalForSep \numberOfIterations!/2 +\decimalForPerm$\;
		Calculate $\bitsForPhase$ from  $(\angleexpAll[1],\mydots,\angleexpAll[\numberOfIterations], \arbitraryPhaseK)$\;
		Calculate $\bitsForSep$ from $\decimalForSepAndPerm$\;
	}\;
	
	\SetKwProg{Fn}{Function}{}{}
	\Fn{  $(\indexOptimum,\seqPermutationCompShiftD,\angleexpAllBitD[1],\mydots,\angleexpAllBitD[\numberOfIterations],\arbitraryPhaseKBitD)=$\Fdecoder{$\setOfweightedSequence$, $\setOfweightedChannel$, $\setOfpermutations$}}{

		Enumerate $\setOfweightedSequenceSub$,$\setOfweightedChannel$, and	$\setOfpermutationsSub$

		\eIf{$\numberOfIterations=1$}{
			Set $\indexOptimum$ as  the index of the best sequence in $\setOfweightedSequenceSub$ \;
			Obtain $\arbitraryPhaseKBitD$ and $\angleexpAllBitD[1]$ based on ML detection for $\indexOptimum$th sequence\;
			Set $\seqPermutationCompShiftD=(1)$		
		}{
			Prune  $\setOfweightedSequenceSub$, $\setOfweightedChannelSub$, $\setOfpermutationsSub$\; 
			Populate the indices of the chosen sequences in $\setOfgoodSequenceIndexSub$\;
			Run $(i,\seqPermutationCompShiftSub,\angleexpAllBitD[2],\mydots,\angleexpAllBitD[\numberOfIterations],\arbitraryPhaseKBitD)=$\Fdecoder{$\setOfweightedSequenceSub$, $\setOfweightedChannelSub$, $\setOfpermutationsSub$}\;
			Calculate $\indexOptimumPre$ as the $i$th element of $\setOfgoodSequenceIndexSub$\;
			Calculate $\indexOptimum$ from $\indexOptimumPre$ (based on the enumaration order)\;
			Calculate $\permutationMonoD[{1}]$ as $ \text{the }\indexOptimum\text{th element of }\setOfpermutationsSub$\;
			Set $(\permutationMonoD[2],\mydots,\permutationMonoD[\numberOfIterations])$ as $  (1,\mydots,\permutationMonoD[1]-1,\permutationMonoD[1]+1,\mydots,\numberOfIterations)_{\seqPermutationCompShiftSub}$\;
			Calculate $\angleexpAllBitD[1]$ from $\indexOptimum$  (based on the enumaration order)\;
			
		}
	}\;
\end{algorithm}
The decoder can be described in the following fives stages:
\subsubsection{Preparation}
We first choose the most likely $\numberOfSelectSeqences=\min\{\cardinalityDistance[\numberOfIterations,\spacing][\remaningLength],\numberOfSelectSeqencesMax\}$ separations based on the metric given by
\begin{align}
	\sum_{\varMonomial=0}^{2^{\numberOfIterations}-1} \max_{{c_\varMonomial}\in\integers_\numberOfPointsForPSK}(\Re\{\exponentialBase^{-\constantj c_\varMonomial}\weightedElement[\varMonomial]\})-\weightedChannel[\indexEnumaration]~,
	\label{eq:score}
\end{align}
i.e., a high \ac{SNR} estimation of \eqref{eq:mlNew}, where $\numberOfSelectSeqencesMax$ is the maximum number of separations that are considered in the decoding procedure. Since the receiver does not know $\permutationMono[1]\in\{1,2,\mydots\numberOfIterations\}$ in advance, we repeat the selected sequences $\numberOfIterations$ times and populate them in $\setOfweightedSequence$, which leads to $\numberOfSequences=\numberOfIterations\numberOfSelectSeqences$ sequences. The corresponding channel quality information and the hypothesized $\permutationMono[1]$ are listed in $\setOfweightedChannel$ and $\setOfpermutations$, respectively.

\subsubsection{Enumeration}
The decoder enumerates $(\weightedElementAfterSum[\varMonomial])_{\varMonomial=0}^{2^{\numberOfIterations-1}-1}$ based on \eqref{eq:newSeq} for $\numberOfIterations-1$ and $\numberOfPointsForPSK$ options for $\permutationMono[2]$ and $\angleexpAll[1]$, respectively. It then populates the resulting sequences in $\setOfweightedSequenceSub$. It also keeps the channel quality information and chosen  $\permutationMono[2]$ in $\setOfweightedChannelSub$ and $\setOfpermutationsSub$, respectively.

\subsubsection{Pruning}
To avoid exponential growth, the decoder prunes  $\setOfweightedSequenceSub$ (and the corresponding elements of $\setOfweightedChannelSub$ and $\setOfpermutationsSub$). To this end, it calculates the score given by
\begin{align}
	\sum_{\varMonomial=0}^{2^{\numberOfIterations-1}-1} \max_{{c_\varMonomial}\in\integers_\numberOfPointsForPSK}(\Re\{\exponentialBase^{-\constantj c_\varMonomial}\weightedElement[\varMonomial]'\})-\weightedChannel[\indexEnumaration]~,
	\label{eq:scoreP}
\end{align}
for  $ \weightedElement[\varMonomial]' \in\setOfweightedSequenceSub$.
The detector then chooses $\numberOfGoodSeqences$ sequences based on the score and terminates others. For backtracking, it also lists the indices of the chosen sequences in $\setOfgoodSequenceIndexSub$. It then calls itself. It repeats the enumeration and pruning stages till it reaches to $\numberOfIterations=1$.

\subsubsection{Backtracking}
For $\numberOfIterations=1$, the decoder performs \ac{ML} detection for the sequences in $\setOfweightedSequenceSub$ and finds the $\indexOptimum$th sequence in  $\setOfweightedSequenceSub$ that maximizes the likelihood. It then returns the sequence index $\indexOptimum$  and the detected parameters $\angleexpAllBitD[1]$, $\arbitraryPhaseKBitD$, and $\seqPermutationCompShiftD=(1)$. For $\numberOfIterations\ge2$, the decoder first identifies the chosen sequence by using $\setOfgoodSequenceIndexSub$ and the provided index from the preceding step, i.e., $\indexOptimumPre$. It then calculates $\angleexpAllBitD[1]$  and $\permutationMonoD[1]$ by using $\setOfpermutationsSub$ and $\indexOptimumPre$ and combines them with the detected permutation and the phases from the preceding step. 

\subsubsection{De-mapping}
After backtracking is finalized, the detected $\seqStilde$ can be obtained from $\indexOptimum$. The parameter $\decimalForSepD$ can then be calculated as $\decimalForSepD =\algorithmDecoderDeltaSeparation[{\seqStilde},\remaningLength,\numberOfIterations,\spacing]-1$. Therefore, $\decimalForSepAndPermD$ can be obtained as
$\decimalForSepAndPermD = \decimalForSepD\numberOfIterations!/2+\decimalForPermD$, where $\decimalForPermD$ is the corresponding integer for $\seqPermutationCompShiftD$. 
It is worth noting that if the detected $\permutationMonoD[1]$ is less than the detected $\permutationMonoD[\numberOfIterations]$, the detector reverses the elements of $\seqPermutationCompShift$ and the detected $(\angleexpAll[1],\mydots,\angleexpAll[\numberOfIterations])$.
Finally, the decoder converts the decimal $\decimalForSepAndPermD$ and the sequence $(\angleexpAllBitD[1],\mydots,\angleexpAllBitD[\numberOfIterations], \arbitraryPhaseKBitD)$ to $\bitsForSepD$ and  $\bitsForPhaseD$, respectively. The pseudocode for the decoder is given in Algorithm~\ref{alg:decoder}. The receiver diagram is also provided in  \figurename~\ref{fig:txrx}\subref{subfig:rx}.

\subsection{Complexity}
At the preparation stage, the identification of  $\numberOfSelectSeqences$  best candidates for the separations based on \eqref{eq:score} requires the calculation of \eqref{eq:score} by $\cardinalityDistance[\numberOfIterations,\spacing][\remaningLength]$ times and an algorithm for identifying at most $\numberOfSelectSeqencesMax$ candidates. The time complexity for \eqref{eq:score} increases linearly with $2^\numberOfIterations$, $\cardinalityDistance[\numberOfIterations,\spacing][\remaningLength]$, and $\numberOfPointsForPSK$ and it is larger than the time complexity of the sorting algorithm. Therefore, the time complexity of the preparation stage is $\mathcal{O}(2^\numberOfIterations\numberOfPointsForPSK\cardinalityDistance[\numberOfIterations,\spacing][\remaningLength])$. It is worth noting that identifying the indices based on sorted subcarrier energy levels  can substantially reduce to $\mathcal{O}(2^\numberOfIterations\numberOfPointsForPSK)$. However, since a single misidentified index causes an {\em ordering} problem (i.e., the decoder needs to deal with not only noise but also permuted and/or erased elements), the decoder with such simplifications tends to work only at high \ac{SNR}. 

At the enumeration stage after the preparation, the decoder enumerates $\numberOfPointsForPSK(\numberOfIterations-1)\numberOfIterations\numberOfSelectSeqences$ sequences of length $2^{\numberOfIterations-1}$ from $\numberOfIterations\numberOfSelectSeqences$ sequences of length $2^{\numberOfIterations}$ based on \eqref{eq:newSeq}. The corresponding time complexity can be calculated as
$\mathcal{O}(2^{\numberOfIterations-1}\numberOfPointsForPSK\numberOfIterations\numberOfSelectSeqences)$. Based on \eqref{eq:scoreP}, the decoder then prunes a majority of the enumerated sequences, i.e.,  at most $\numberOfGoodSeqences$ of them survive. The time complexity of \eqref{eq:scoreP} can be calculated as $\mathcal{O}(2^{\numberOfIterations-1}\numberOfPointsForPSK\numberOfIterations\numberOfSelectSeqences)$. For the following steps, the decoder enumerates $\numberOfPointsForPSK(\numberOfIterations-\indexRecursion)\numberOfGoodSeqences$  sequences of length $2^{\numberOfIterations-\indexRecursion}$ at the $\indexRecursion$th recursion step for $2\le\indexRecursion\le\numberOfIterations-1$ and chooses $\numberOfGoodSeqences$ sequences. Therefore, the time complexity is $\mathcal{O}(2^{\numberOfIterations-\indexRecursion-1}\numberOfPointsForPSK\numberOfGoodSeqences)$ for the $\indexRecursion$th recursion step.

Based on our trials, we observe that $\numberOfSelectSeqences$ often needs to be larger than $\numberOfGoodSeqences$ for a large $\remaningLength$. Thus, the first enumeration step is the most expensive part as compared to the following steps in the recursions, which may prohibit the practical implementations for devices with low computational power. Hence, a lower-complexity decoding algorithm for the partitioned \acp{CS} is needed and it is currently an open problem.

\section{Numerical Results}
\label{sec:numeric}
In this section, we evaluate the partitioned \acp{CS} for an \ac{OFDM}-based communication system, numerically. For side-by-side comparisons, we consider the standard \acp{CS} proposed in \cite{davis_1999} and the polar code in 3GPP \ac{5G} \ac{NR} \ac{UL}. For the standard \acp{CS}, we consider $\numberOfIterations\in\{4,5,6,7\}$. For the partitioned \acp{CS}, we also consider the same sequence length as the standard \acp{CS}, i.e., $\lengthData\in\{16,32,64,128\}$, and set $\remaningLength$ as $\lengthData/2$ based on Figure~\ref{fig:cardinality}\subref{subfig:bitsOnPartAndNZ}. For the decoder, we set $\numberOfSelectSeqencesMax=10000$ and $\numberOfGoodSeqences=400$. The same decoder is used for the standard \acp{CS}. For all \acp{CS} configurations, we consider \ac{OFDM} and \ac{QPSK}, i.e., $\numberOfPointsForPSK=4$. 	  For a fair comparison, we adjust the code rate of the polar code such that it leads to the same \ac{SE} and codeword length as the \acp{CS}. We also add $11$ \ac{CRC} bits to the information bits and interleave the coded bits. For the decoder, we use a list decoder with soft information, where the list length is set to $64$. We use  \ac{DFT-s-OFDM} for polar code unless otherwise stated and consider both
 $\pi/2$-\ac{BPSK} and $\pi/4$-\ac{QPSK}, which are effective to mitigate the \ac{PMEPR} by reducing the zero-crossings in the time domain.	For the multi-path fading channel model, we use ITU Vehicular A  with no mobility. We assume that the \ac{CP} duration is longer than the maximum excess delay of the multi-path channel to avoid inter-symbol interference, the symbol duration is set to $\symbolDuration=66.67~\mu$s, and the transmitter and receiver are synchronized in both time and frequency.

\subsection{Minimum Distance, Spectral Efficiency, and Bandwidth}
\begin{figure}
	\centering
	\subfloat[Minimum distance versus \ac{SE}.]{\includegraphics[width =3.5in]{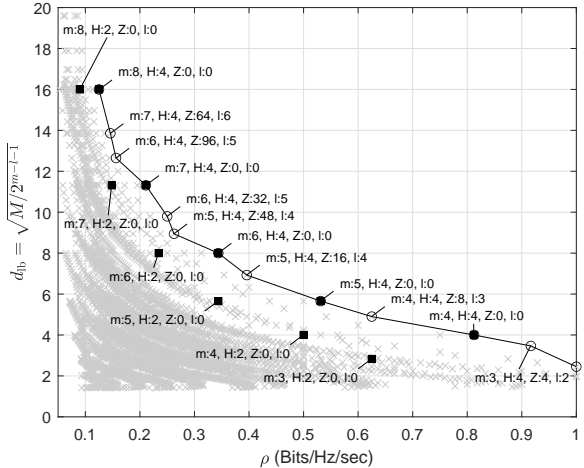}
		\label{subfig:dminVersusSE}}\\
	\subfloat[Minimum distance versus bandwidth for a given range of \ac{SE}.]{\includegraphics[width =3.5in]{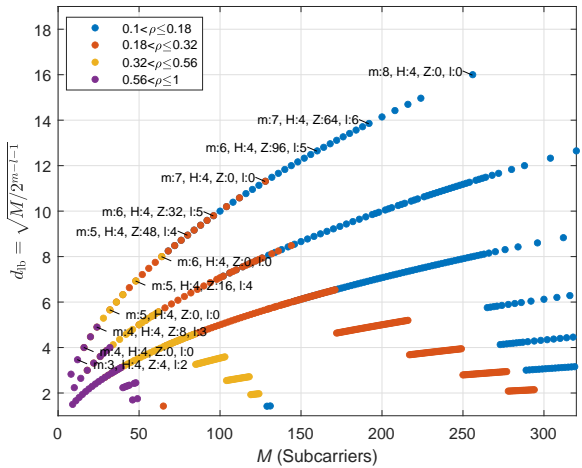}
		\label{subfig:dminVersusM}}
	\caption{The partitioned \acp{CS} can maintain the minimum distance properties of the standard \acp{CS} and be compatible with the non-power-two $\lengthData$.}
	\label{fig:dmin}
\end{figure}
In \figurename~\ref{fig:dmin}, we evaluate the trade-offs between minimum distance, \ac{SE}, and bandwidth for partitioned \acp{CS} and the standard \acp{CS} by calculating these parameters for  $\numberOfIterations\in\{3,4,5,6,7,8\}$, $0\le\spacing\le\numberOfIterations-1$, $\remaningLength\in\{0,1,\mydots,256\}$, and $\numberOfPointsForPSK=\{2,4\}$. Since $\minimumDistanceNZ=\sqrt{M}$ for $\numberOfPointsForPSK=\{2,4\}$, $\minimumDistance\ge\lowerBoundMinimumDistance\triangleq \sqrt{\lengthData/2^{\numberOfIterations-\spacing-1}}$ for $0\le\spacing\le\numberOfIterations-1$. In \figurename~\ref{fig:dmin}\subref{subfig:dminVersusSE}, we calculate  $\spectralEfficient$ and $\lowerBoundMinimumDistance$ for all combinations and mark several best cases that provide the maximum $\lowerBoundMinimumDistance$ for a given \ac{SE}. \figurename~\ref{fig:dmin}\subref{subfig:dminVersusSE} shows that the partitioned \acp{CS} support a wide range of \ac{SE} as opposed to the standard \acp{CS} based on \eqref{eq:imagPartReduced}. We also observe that the partitioned \acp{CS} with a smaller $\numberOfIterations$ can achieve a similar 
$\lowerBoundMinimumDistance$ of the standard \acp{CS}. For instance, $\lowerBoundMinimumDistance=8$ for $\numberOfIterations=6$ without partitioning, $\lowerBoundMinimumDistance=8.9$ for $\numberOfIterations=5$, $\remaningLength=48$, and $\spacing=4$. In \figurename~\ref{fig:dmin}\subref{subfig:dminVersusM}, we also take the impact of bandwidth on the minimum distance and \ac{SE} into account. For a given parameter set of \ac{SE} range and bandwidth, we plot the maximum  $\lowerBoundMinimumDistance$. For example, when $\lengthData=160$ subcarriers, the $\lowerBoundMinimumDistance=12.65$ is achieved for the configuration $\numberOfIterations=6$, $\remaningLength=96$, and $\spacing=5$, where the \ac{SE} is $\spectralEfficient=25/160$ bits/Hz/sec, which is between $0.1$ and $0.18$ bits/Hz/sec. 
 \figurename~\ref{fig:dmin}\subref{subfig:dminVersusM} explicitly shows that a larger $\lowerBoundMinimumDistance$ is achieved for a larger $\lengthData$ or a smaller ${\spectralEfficient}$. Although the partitioning does not remedy the loss of \ac{SE} for a larger $\lengthData$  under a minimum distance constraint, it enables many different options for various \acp{SE} and bandwidths, which is beneficial for exploiting the available \ac{DoF} in the frequency domain better.

\subsection{PMEPR and Error Rate}

\begin{figure}[t]
	\centering
	{\includegraphics[width =3.5in]{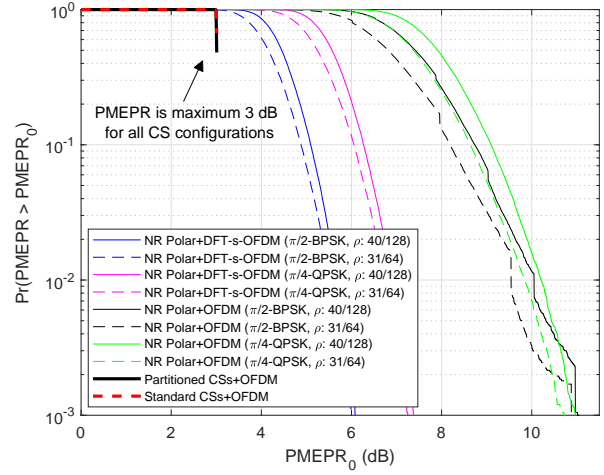}
	}
	\caption{{PMEPR distributions.}}
	\label{fig:PAPR}
\end{figure}
In \figurename~\ref{fig:PAPR}, we provide the \ac{PMEPR} distributions for various configurations, respectively. As expected, the partitioning maintains the \ac{PMEPR} benefit of the standard \acp{CS} as the partitioned sequences based on Theorem~\ref{th:reduced} are still \acp{CS}. Hence, the maximum \ac{PAPR} is  $3$~dB for all configurations of \acp{CS}. Since the partitioning does not alter the norm of the sequence, the mean \ac{OFDM} symbol power also remains constant. The polar code with \ac{DFT-s-OFDM} is superior to that of \ac{OFDM} in terms of \ac{PMEPR} as  \ac{DFT} pre-coding effectively converts the multi-carrier nature of the \ac{OFDM} to a form of single-carrier waveform. The \ac{PMEPR} distribution for the polar code with $\pi/2$-BPSK is better than the one with $\pi/4$-QPSK. This is expected because the zero-crossings in the time domain  are mitigated further with $\pi/2$-BPSK, as compared to ones with $\pi/4$-QPSK.  Nevertheless, the gap between \acp{CS} and the polar code with \ac{DFT-s-OFDM} with $\pi/2$-\ac{BPSK}  can still reach up to 3~dB.

\begin{figure*}
	\centering
	\subfloat[{AWGN channel.}]{\includegraphics[width =3.5in]{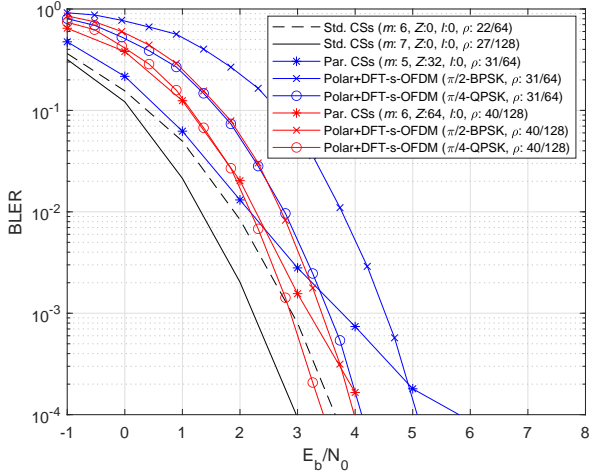}
		\label{subfig:blerebn0_AWGN_noDistance}}
	\subfloat[{Fading channel.}]{\includegraphics[width =3.5in]{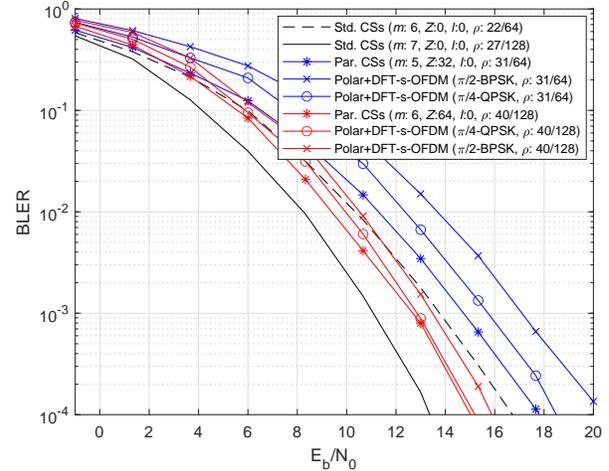}
		\label{subfig:blerebn0_fading_noDistance}}	
	\\
	\subfloat[AWGN channel ($\spacing>0$ for the partitioned CSs).]{\includegraphics[width =3.5in]{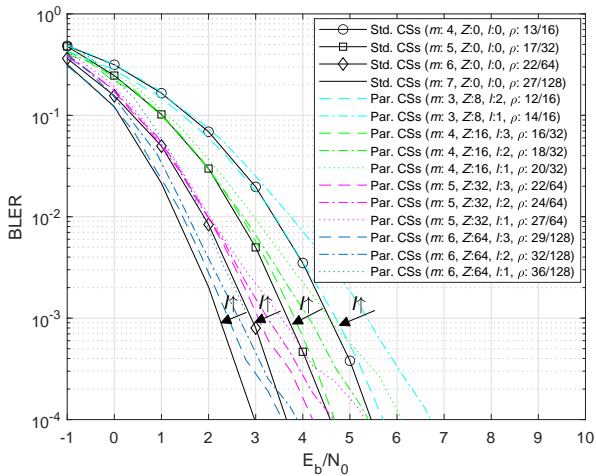}
		\label{subfig:blerebn0_AWGN}}	
	\subfloat[Fading channel ($\spacing>0$  for the partitioned CSs).]{\includegraphics[width =3.5in]{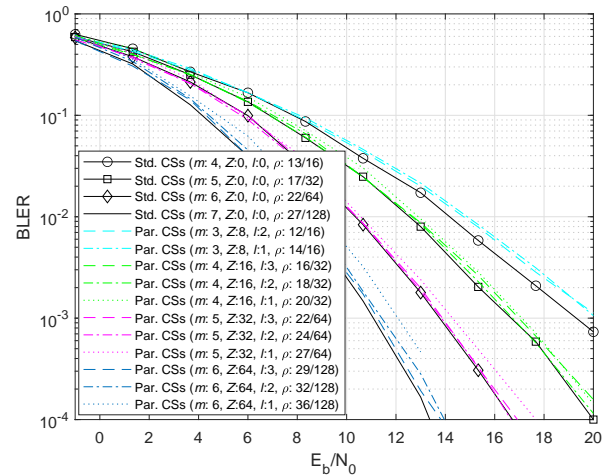}
		\label{subfig:blerebn0_fading}}
	\caption{BLER performance of the partitioned CSs as compared to that of the standard CSs in \cite{davis_1999} and polar code with \ac{DFT-s-OFDM}.}
	\label{fig:bler}
\end{figure*}
In \figurename~\ref{fig:bler}, we provide the BLER versus $\EbNO$ curves in \ac{AWGN} and fading channels. In \figurename~\ref{fig:bler}\subref{subfig:blerebn0_AWGN_noDistance} and \figurename~\ref{fig:bler}\subref{subfig:blerebn0_fading_noDistance}, we do not restrict the supports of the partitioned \acp{CS}, i.e., $\spacing=0$. As discussed in Section~\ref{sec:CSwithNS}, under the same bandwidth, the partitioned \acp{CS} can yield a higher \ac{SE} with a smaller $\numberOfIterations$  as compared to the standard \acp{CS}. For example, $\lengthData=128$, the number of information bits increases to $40$ for $\numberOfIterations=6$ whereas it is $27$ bits for the standard \acp{CS} with $\numberOfIterations=6$. However, since the partitioning  decreases the minimum Euclidean distance of the standard \acp{CS}, the \ac{BLER} increases for the partitioned \acp{CS}. At $1e-3$ \ac{BLER}, the losses in  $\EbNO$ are approximately $1$ dB for both  $\lengthData=64$ and $\lengthData=128$, respectively, for the \ac{AWGN} channel.
The slopes of the curves for the standard \acp{CS} and the partitioned \acp{CS} are also different as the pairwise distance distribution of the set of partitioned \acp{CS} is different from the one for the standard \acp{CS}. 	We also observe that the case where $\remaningLength=64$ performs worse than the case  where $\remaningLength=32$ for low $\EbNO$. This is due to the  sequence identification at the preparation phase of the decoder, which requires a larger $\numberOfSelectSeqences$ for the case where $\remaningLength=64$ at the expense of higher complexity. In the fading channel, the $\EbNO$ losses are approximately $1$~dB at $1e-3$ \ac{BLER} for all cases. 

In \figurename~\ref{fig:bler}, we also analyze the error rate with the polar code with $\pi/2$-BPSK and $\pi/4$-QPSK under the same spectral efficiency provided by partitioned \acp{CS} . As expected, the error rate for the polar code with $\pi/4$-QPSK is better than the one with $\pi/2$-BPSK because a larger codeword length is utilized. Nevertheless, the performance of the polar code  with $\pi/4$-QPSK  is similar to that of the partitioned \acp{CS}. We observe that the performance differences are within the range of 1 dB at $1e-3$ \ac{BLER} in both \ac{AWGN} and fading channels. 
While the polar code has an advantage in terms of receiver complexity as compared to the decoder for the partitioned \ac{CS} used in this study, the partitioned \ac{CS}-based encoding is superior to the polar code with $\pi/4$-QPSK   in terms of \ac{PMEPR} as shown in \figurename~\ref{fig:PAPR}. Note that the slope of the error rate curves in the fading channel is a function of the bandwidth for all schemes. This is because more diversity gain is achieved for a larger $\lengthData$.

In \figurename~\ref{fig:bler}\subref{subfig:blerebn0_AWGN} and \figurename~\ref{fig:bler}\subref{subfig:blerebn0_fading}, we consider the cases where the supports of the partitioned \acp{CS} are restricted to achieve a larger minimum distance. The $\EbNO$ loss quickly diminishes for all cases with an increasing $\spacing>0$ in both \ac{AWGN} and fading channel. For a large $\spacing$, the partitioned \acp{CS} perform similar to the standard \acp{CS} with a similar \ac{SE}. For example, for the case $(\numberOfIterations=6,\remaningLength=64,\spacing=3)$, the \ac{SE} is $29/128$ bits/Hz/sec and it performs similar to the standard \acp{CS}, where $\numberOfIterations=7$ and the corresponding \ac{SE} is $27/128$  bits/Hz/sec. The minimum distance  for the standard \acp{CS} can be calculated as $\sqrt{128}$ when $\numberOfIterations=7$. However, $\minimumDistance\ge\lowerBoundMinimumDistance=\sqrt{32}$ for the partitioned \acp{CS}. 
The diminishing loss in the average \ac{BLER} implies that a majority of the pairwise distances for the partitioned \acp{CS} are actually larger than $\lowerBoundMinimumDistance=\sqrt{32}$. Similar observations can also be made for other cases.

\section{Conclusion}
In this study, we  analyze partitioned \acp{CS} for \ac{OFDM}. We analytically obtain the number of partitioned \acp{CS} for given bandwidth and a minimum distance constraint. We derive the corresponding recursive methods for mapping a natural number to a separation and vice versa. We show that the partitioning rule under Theorem~\ref{th:reduced} has symmetrical characteristics, which can be utilized to restrict the partitions based on a minimum distance constraint. In addition, we develop an encoder and a recursive decoder for partitioned \acp{CS}. 

Our results indicate that a larger number of \acp{CS} can be synthesized  as compared to the standard \acp{CS} for a given bandwidth through partitioning. For example, for $512$ \ac{DoF}, we show that the number of distinct \acp{CS} is increased approximately by a factor of $2^{25}$  without changing the alphabet of the non-zero elements of the \ac{CS} (i.e., the standard sequences with $\numberOfPointsForPSK$-\ac{PSK}) with the partitioning. Hence, partitioning provides a way of achieving a larger \ac{SE} without changing the alphabet of the non-zero elements of the \acp{CS} or using a high-order modulation for \acp{CS}. Partitioning without any restriction decreases the minimum Euclidean distance of the standard \acp{CS}. By using our encoder and detector, we show that the \ac{SNR} losses are around $2$ dB and $1$ dB in \ac{AWGN} and fading channels at $1e-3$ \ac{BLER}, respectively, as compared to the standard \acp{CS}. However, under a minimum distance constraint, the partitioned \acp{CS} perform similar to the standard \acp{CS} in terms of error average \ac{BLER} and they can utilize the available resources in the frequency domain better than the standard \acp{CS}. 

The partitioned \acp{CS} that we investigate in this paper extend the standard \acp{CS} from Davis and Jedwab's encoder in \cite{davis_1999} and have some desirable properties such as the adjustable minimum Euclidean distance. To fully exploit the properties of partitioned \acp{CS}, a low-complexity decoder  is  needed, which is currently a difficult open problem. Extending the non-zero elements of partitioned \acp{CS} to a larger constellation  to increase the minimum distance is also another future research direction that can be pursued. Because of the absence of a tighter  \ac{SE} bound that takes the sequence length, \ac{SNR}, and \ac{PMEPR} constraints than the available ones, e.g. \cite{Paterson_2000bound, Tarokh_2000bound}, in this study, we cannot compare the obtained \ac{SE} of the proposed scheme with a theoretical bound. Hence, a tight upper bound on the \ac{SE} is needed for a better understanding of the theoretical limits under \ac{PMEPR} constraints.

\label{sec:conc}
\bibliographystyle{IEEEtran}
\bibliography{csWithNonSquashingIntegers}

\end{document}